\newtheorem{mechanism}{Mechanism}
\newcommand{\sect}[1]{\hyperref[sect:#1]{Section~\ref*{sect:#1}}}
\newcommand{\append}[1]{\hyperref[append:#1]{Appendix~\ref*{append:#1}}}
\newcommand{\lem}[1]{\hyperref[lem:#1]{Lemma~\ref*{lem:#1}}}
\newcommand{\obs}[1]{\hyperref[obs:#1]{Observation~\ref*{obs:#1}}}
\newcommand{\thm}[1]{\hyperref[thm:#1]{Theorem~\ref*{thm:#1}}}
\newcommand{\mech}[1]{\hyperref[mech:#1]{Mechanism~\ref*{mech:#1}}}
\newcommand{\cor}[1]{\hyperref[cor:#1]{Corollary~\ref*{cor:#1}}}
\newcommand{\figg}[1]{\hyperref[fig:#1]{Figure~\ref*{fig:#1}}}
\newcommand{\tab}[1]{\hyperref[tab:#1]{Table~\ref*{tab:#1}}}
\newcommand{\ex}[1]{\hyperref[ex:#1]{Example~\ref*{ex:#1}}}
\newcommand{\defi}[1]{\hyperref[def:#1]{Definition~\ref*{def:#1}}}
\newcommand{\assump}[1]{\hyperref[assump:#1]{Assumption~\ref*{assump:#1}}}
\def\>{\rangle}
\def\<{\langle}
\newcommand{\compilehidecomments}{false}
	\newcommand{\zongqi}[1]{{\color{red}  [\text{Zongqi:} #1]}}
\newcommand{\wei}[1]{{\color{blue} [\text{Wei:} #1]}}
\newcommand{\jialin}[1]{{\color{cyan} [\text{Jialin:} #1]}}
\newcommand{\zhijie}[1]{{\color{yellow} [\text{Zhijie:} #1]}}        
	\newcommand{\zongqi}[1]{}
\newcommand{\wei}[1]{}
\newcommand{\jialin}[1]{}
\newcommand{\zhijie}[1]{}
\title{Design and Characterization of Strategy-Proof Mechanisms for\newline Two-Facility Game on a Line}
\author{
Pinyan Lu
\and
Zihan Luo \and
Jialin Zhang
}
\institute{Shanghai University of Finance and Economics, China \and
Institute of Computing Technology Chinese Academy of Sciences, China
\and
Institute of Computing Technology Chinese Academy of Sciences, China}
\begin{document}
\maketitle
%\tableofcontents

\begin{abstract}
    We focus on the problem of placing two facilities along a linear space to serve a group of agents. Each agent is committed to minimizing the distance between her location and the closest facility. A mechanism is an algorithm that maps the reported agent locations to the facility locations. We are interested in mechanisms without money that are deterministic, strategy-proof, and provide a bounded approximation ratio for social cost. 

    %In this paper, we provide a more detailed characterization result for all strategy-proof mechanisms with a bounded approximation ratio, which characterizes the position of two facilities. Based on this characterization, we propose three new classes of mechanisms that have different features in terms of the position, number, and selection method of switching points. These mechanisms demonstrate the diversity of mechanisms for two-facility game problems, which also have significant value for understanding and solving more general facility game problems. Moreover, we find that accurate prediction cannot effectively improve the consistency of the mechanism in the two-facility game problem, while this methodology to overcome bad approximation ratio works in many other mechanism design problems.

    It is a fundamental problem to characterize the family of strategy-proof mechanisms with a bounded approximation ratio. Fotakis and Tzamos already demonstrated that the deterministic strategy-proof mechanisms for the 2-facility game problem are mechanisms with a unique dictator and the leftmost-rightmost mechanism. In this paper, we first present a more refined characterization of the first family. 

    We then reveal three new classes of strategy-proof mechanisms that show the intricacy of structure within this family. This helps us get a more complete picture of the characterization of the 2-facility game problem, and may also have value in understanding and solving more general facility allocation game problems.

    Besides, based on our refined characterization, we surprisingly find that prediction cannot effectively improve the performance of the mechanism in the two-facility game problem, while this methodology to overcome bad approximation ratio works in many other mechanism design problems. We show that if we require that the mechanism admits a bounded approximation ratio when the prediction is arbitrarily bad, then at the same time, the mechanism can never achieve sublinear approximation ratios even with perfect prediction.
\end{abstract}

\section{Introduction}

The facility game problem, characterizing a large number of scenarios about public resource allocation, is an important research topic in the study of social choice theory and algorithmic mechanism design \cite{lavi2011truthful,feldman2011randomized,ju2008efficiency,schummer2007mechanism,miyagawa2001locating}.
In this problem, the government requires agents in need to report their locations and then uses a mechanism to calculate the location of the facilities. 

However, the objectives of agents and the government are not the same. For agents, they only focus on minimizing their own cost, that is their distance to the nearest facility. The government focuses on minimizing the total distance from each agent to her nearest facility, which is known as social cost. Therefore, the most important requirement in mechanism design research is strategy-proof: even if each agent only cares about her own interests, honestly reporting her location is still optimal. If a mechanism is not strategy-proof, agents may act dishonestly to obtain a greater personal benefit, even if it goes against the goals of the government. This can lead to uncontrollable reported location of agents, which in turn leads to uncontrollable results and performance of the mechanism. Additionally, we also care about the performance of the mechanism, calculate the total distance from each agent to her nearest facility, and compare it with the optimal value when strategy-proofness is not considered. Undoubtedly, due to the constraints of strategy-proofness, many mechanisms cannot achieve optimal performance, and we usually use approximation ratios to measure the performance of different strategy-proof mechanisms.

In the paper, we focus on the most classical and important setting of facility game on a line. In this setting, the location of an agent $i$ can be denoted by a real number $x_i$. 

\subsection{Characterization of strategy-proof mechanisms}

Characterizing the family of strategy-proof mechanisms is one of the most important and fundamental problems in the area of social choice theory and mechanism design. The renowned Gibbard-Satterthwaite theorem \cite{sen2001another} demonstrated that in cases where preferences on alternatives for each agent can be arbitrary, the only strategy-proof mechanisms are dictatorships when the number of alternatives exceeds two. A dictatorship mechanism always selects the most preferred alternative for a specific agent $i$, who is termed the dictator. In the context of the facility game, a dictator $i$ means that we always build a facility exactly at the location of $x_i$, which naturally aligns with her preference.

However, the facility game on a line introduces a constraint: preferences on facility locations are not arbitrary. Specifically, each agent $i$ has a single preferred location $x_i$. When two facility locations are on the same side of $x_i$, agent $i$ will always favor the one closer to $x_i$. These specific preferences are referred to as single-peaked preferences, a concept first explored by Black \cite{black1948rationale}. Because of this constraint, the Gibbard-Satterthwaite theorem does not hold with single-peaked preferences. Consequently, the facility game allows for a broader range of strategy-proof mechanisms. Moulin \cite{moulin1980strategy}  characterized the class of all strategy-proof mechanisms for a one-facility game on the real line. Notably, Barbera and Jackson \cite{barbera2004choosing}, as well as Sprumont \cite{sprumont1991division}, refined Moulin's work by eliminating an unnecessary assumption in the proof. Particularly, for the one-facility game, a generalized median voter scheme is sufficient to characterize all strategy-proof mechanisms. For an in-depth exploration, readers can consult Barbera's comprehensive survey \cite{barbera2001introduction}. Furthermore, it is worth mentioning that the median mechanism stands out as 1-approximation, consistently achieving the optimal social cost.

In this paper, our focus centers on characterizing the two-facility game. The structure of strategy-proof mechanisms for this game is notably richer than that of the one-facility game. This complexity arises from the substantial tie-ness within agents' preferences. Agents only prioritize the closer facility, disregarding the position of the further one, resulting in all those outputs being tied from the agent's viewpoint. Specifically, the following mechanism is strategy-proof: the first facility precisely aligns with agent $i$'s position $x_i$, and the second facility's location is an arbitrary function of agent $i$'s reported position. However, this mechanism is somewhat uninteresting as it underutilizes the second facility. To explore more compelling mechanisms, we adopt the framework from \cite{lu2010asymptotically} which focuses on mechanisms with bounded approximation ratios. The previously mentioned mechanism fails to satisfy this crucial property.

As an extension of the generalized median scheme, the following mechanism is strategy-proof: placing two facilities at the positions of the $k$-th and $j$-th agents, for any given $k$ and $j$. It is evident that among these options, the only one with a bounded approximation ratio is positioning the facilities at the leftmost and rightmost positions—namely, the 1st and n-th agents.

The characterization results outlined in \cite{fotakis2014power} demonstrate that the deterministic strategy-proof mechanisms with bounded approximation ratios for the 2-facility game problem are mechanisms with a unique dictator and the leftmost-rightmost mechanism. While the second mechanism is a concrete solution and thus a complete characterization, the same cannot be said for the first family. This is because it only specifies placing one facility with a dictator, leaving the allocation of the other facility unspecified.

In this paper, we refine the characterization of the first family. We prove that the other facility, apart from the one located based on the dictatorship, is either positioned to the left of or coincident with the leftmost agent, or to the right of or coincident with the rightmost agent, or coincident with the first facility.

Our contributions unfold in two significant ways. Firstly, we present a more refined characterization of the family of mechanisms with a unique dictator. Secondly, our research reveals a remarkably intricate structure within this family. Historically, only one mechanism was recognized, where the other facility's placement was determined solely by the relative positions of the leftmost and rightmost agents. However, our study uncovers several compelling strategy-proof mechanisms, where the reported positions of multiple agents can influence the facility's location. Many of these intricacies were previously undiscovered, making these mechanisms inherently captivating. These novel findings mark a crucial step toward achieving the final, comprehensive characterization of this family.

\subsection{Mechanism design with prediction}
We introduce a distinct yet interconnected line of research: the novel field of mechanism design with prediction. Similar to the concept of algorithm design with prediction \cite{purohit2018improving,lykouris2021competitive,lattanzi2020online}, the incorporation of accurate predictions holds the potential to significantly enhance mechanism performance, a phenomenon referred to as consistency, without compromising the worst-case guarantee too much, known as robustness~\cite{purohit2018improving}. Within this paradigm, the study of two-facility games emerges as a natural focus, given the inherent challenge posed by a stringent lower bound of $\Omega(n)$ for deterministic strategy-proof mechanisms \cite{fotakis2014power}. While facility games with prediction have been explored previously \cite{agrawal2022learning,xu2022mechanism,istrate2022mechanism}, a pivotal question surfaced in \cite{xu2022mechanism}: Can a strategy-proof mechanism achieve consistency within $o(n)$ while maintaining bounded robustness?

Our research addresses this question and presents a striking and definitive finding: achieving such a balance is inherently unattainable. We formally establish the impossibility of devising a strategy-proof mechanism with $o(n)$ approximation under accurate predictions while ensuring a bounded approximation in cases of inaccurate predictions. To the best of our knowledge, this represents the inaugural strong impossibility result in this domain. Our proof rests on a refined characterization, as elaborated in the preceding subsection.

\subsection{Related works}

The facility game has a rich history in social science literature. In \cite{procaccia2013approximate}, the strategic facility game problem was first introduced as an example of a mechanism without money. 

For the single-facility game problem, current research has already achieved comprehensive results. For this problem on a line, Procaccia and Tennenholtz \cite{procaccia2013approximate} presented a 1-approximation mechanism for the maximum agent's cost and a 2-approximation mechanism for social cost, both achieving the optimal approximation ratio. While in two-dimensional Euclidean space, Meir \cite{meir2019strategyproof} and Goel and Hann-Caruthers \cite{goel2022optimality} presented the mechanisms with optimal approximation ratio for maximum agent's cost and social cost, respectively. Alon et al. \cite{alon2010strategyproof} and Meir \cite{meir2019strategyproof} studied this problem in general metric spaces, d-dimensional Euclidean spaces, circles, and trees, which all achieved constant approximation ratios.

For the facility game problem with more than two facilities, the related work has not come close to achieving near-optimal results in most cases. For the k-facility game problem, Escoffier et al. \cite{escoffier2011strategy} designed strategy-proof mechanisms to locate $k$ facilities for $k+1$ agents and achieves an approximation ratio of $\frac{k+1}{2}$ in general metric spaces. Fotakis and Tzamos \cite{fotakis2014power} proved that when $k\geq 3$, there do not exist any deterministic anonymous strategy-proof mechanisms with a bounded approximation ratio.

Therefore, many works focus on two-facility game problems on a line which is the simplest setting. For the two-facility game problem that the government considers minimizing the maximum agent's cost, Procaccia and Tennenholtz \cite{procaccia2013approximate} proposed a mechanism with a 2-approximation ratio. 
Therefore, the setting with minimizing the social cost is the most complex aspect of the facility game problem. 
In the study on the two-facility game problem on a line for social cost, Fotakis and Tzamos \cite{fotakis2014power} demonstrate a characterization result that the deterministic strategy-proof mechanisms with bounded approximation ratios for this problem are mechanisms with a unique dictator and the leftmost-rightmost mechanism. 
This corresponds to the only two mechanisms we currently know about for this problem. The first is the leftmost-rightmost mechanism proposed in Procaccia and Tennenholtz \cite{procaccia2013approximate} with an approximation ratio $n-2$.  The second is a specific mechanism with a unique dictator, proposed in Lu et al. \cite{lu2010asymptotically} with an approximation ratio of $2n-1$.  The lower bound of the deterministic strategy-proof mechanism is shown as $2$ in \cite{lu2009tighter}. Fotakis and Tzamos \cite{fotakis2014power} improved the lower bound to $n-2$, which is the same as the upper bound, and proved that the deterministic mechanism in \cite{procaccia2013approximate} is optimal. In addition, Lu, Wang, and Zhou \cite{lu2009tighter} obtained an upper bound of $4$ and a lower bound of $1.045$ for randomized, strategy-proof mechanisms.

Unlike the research on online algorithms with prediction, there has been little research on considering mechanism design with prediction. In 2022, Xu and Lu \cite{xu2022mechanism} began studying mechanism design with prediction for facility location games and designed a dictator mechanism that achieved $(1+\frac{n}{2})$-consistency and $(2n-1)$-robustness for social cost, while the optimal approximation ratio without prediction is $n-2$. In the same year, Agrawal et al. \cite{agrawal2022learning} studied a (two-dimensional) facility game problem with prediction and designed mechanisms with 1-consistency and $(1+\sqrt2)$-robustness for social cost, and $\frac{\sqrt{2c^2+2}}{1+c}$-consistency and $\frac{\sqrt{2c^2+2}}{1-c}$-robustness for maximum agent's cost. Additionally, Istrate and Bonchis \cite{istrate2022mechanism} studied obnoxious facility game problems with prediction and designed a mechanism with the optimal trade-off between consistency and robustness.

Several recent papers have investigated various facility problems.
Kanellopoulos et al. \cite{kanellopoulos2023discrete} studied the discrete heterogeneous two-facility location problem.
Xu et al. \cite{xu2021two} focused on two-facility location games with minimum distance requirements, and designed mechanisms with constant approximation ratios.
Zhou et al. \cite{zhou2023facility}  introduced lower thresholds and upper thresholds for the agent’s cost in the setting.
Deligkas et al. \cite{deligkas2023heterogeneous} initiated the study of the heterogeneous facility location problem with limited resources and designed deterministic and randomized strategy-proof mechanisms.
Chan et al. \cite{ijcai2021p596} summarized all existing mechanisms for facility location problems.

\section{Preliminaries}
Let $(\mathrm{\Omega}, d)$ be a metric space, and $d: \mathrm{\Omega}\times \mathrm{\Omega}\rightarrow  \mathnormal{R}$ is the metric. In this paper, we consider the linear metric space, where $ \mathrm{\Omega}= \mathnormal{R}$, and $d\left(x,y\right)$ is the Euclidean distance between $x,y\in \mathnormal{R}$. We define the relative position of two points $x$ and $y$ as follows: if $x<y$, we say that $x$ is on the left of $y$, while if $x>y$, we say that $x$ is on the right of $y$. 

%\textcolor{red}{definition of $\pi$}

Let the set of agents be $N=\{1,2,...,n\}$, where the location of agent $i$ is $x_i\in \mathrm{\Omega}$. We call $\mathbf{x}=(x_1,x_2,...,x_n)$ a location profile, or an instance. In particular, we define a 3-location profile, where there are three sets $N_1, N_2, N_3$ satisfy $\{N_1, N_2, N_3\}$ is a partition of the set $N$, and the location of all agents in set $N_i$ is $y_i$, $i\in\{1,2,3\}$. Let $I_3(N)=(y_1:N_1,y_2:N_2,y_3:N_3)$ denote a 3-location profile for $\{N_1, N_2, N_3\}$. 
Suppose $\pi$ arranges all the three sets according to the increasing order of their locations, i.e. $y_{\pi\left(1\right)}\le y_{\pi\left(2\right)}\le y_{\pi\left(3\right)}$, we say that $\pi$ is the permutation of $\{N_1, N_2, N_3\}$. If $I_3(N)$ is a 3-location profile for $\{N_1, N_2, N_3\}$, and the permutation of $\{N_1, N_2, N_3\}$ is $\pi$, we say that $I_3(N)$ is arranged according to the permutation $\pi$. For example, when $I_3(N)=(y_1:N_1,y_2:N_2,y_3:N_3)$, and $y_1\leq y_3\leq y_2$, we have $\pi=(1,3,2)$ is the permutation of $\{N_1, N_2, N_3\}$, and $I_3(N)$ are arranged according to the permutation $\pi$. In additionally, $\pi(1)=1,\pi(2)=3,\pi(3)=2$.

In the two-facility game on a line, a deterministic mechanism will output two facility locations for a given location profile $\mathbf{x}$, which is actually a function $f: \mathnormal{R}^n\rightarrow \mathnormal{R}^2$. The goal of each agent is to minimize her own cost, i.e., for agent $i$, she will manipulate her reported location to minimize her cost. Given a reported location profile $\mathbf{x}$ and a mechanism $f$, assuming that the facility locations output by the mechanism $f$ for $x$ are $l_1$ and $l_2$, the cost of agent $i$ is its distance from her real location $x_i$ to the nearest facility:
\[cost\left(\{l_1,l_2\},x_i\right)=min\{d\left(l_1,x_i\right),d\left(l_2,x_i\right)\}.\]
If all agents are closer to $l_1(resp.\ l_2)$ than $l_2(resp.\ l_1)$, we say that $l_2(resp.\ l_1)$ is unavailable.

The goal of the mechanism designer is to choose an appropriate mechanism that minimizes the social cost. The social cost of the mechanism $f$ on a reported location profile $\mathbf{x}$ is the total cost of all agents. When the mechanism is strategy-proof (as defined below), the real location profile for all agents is still $\mathbf{x}$:
\[SC(f,\mathbf{x})=\sum_{i=1}^{n}{cost(\{l_1,l_2\},x_{i})}.\]

For a location profile $\mathbf{x}$, the optimal social cost is denoted by $OPT(\mathbf{x})$. The approximation ratio of the mechanism $f$ is $\gamma$ means that for any location profile $\mathbf{x}\in \mathrm{\Omega}^n$, we have:
\[SC(f,\mathbf{x})\le\gamma OPT\left(\mathbf{x}\right).\]

Let $\mathbf{x}_{-i}=(x_1,...,x_{i-1},x_{i+1},...,x_n)$ be the location profile without agent $i$, and we write $\mathbf{x}=\langle x_i,\mathbf{x}_{-i}\rangle$ to emphasize the location of agent $i$. Similarly, when $S\subset N$ is a set of agents, $\mathbf{x}_{-S}$ is the location profile of agents except $S$. Let $\mathbf{x}=\langle\mathbf{x}_S,\mathbf{x}_{-S}\rangle$ represent the location profile in which the agents in $S$ report the locations ${\mathbf{x}}_S$, while the other agents report the locations $\mathbf{x}_{-S}$. For simplicity, we let $f\left(\langle x_i,\mathbf{x}_{-i}\rangle\right)=f\left(x_i,\mathbf{x}_{-i}\right), f\left(\langle x_S ,x_{-S} \rangle\right)=f\left(x_S,x_{-S}\right).$

Now, there is the formal deﬁnitions of strategy-proofness.

\begin{definition}[strategy-proofness]
   We say a mechanism $f$ is strategy-proof if no agent can benefit from misreporting her location. 
 
   Specifically, given an agent $i$, a location profile $\mathbf{x}=\langle x_i,\mathbf{x}_{-i}\rangle\in \mathnormal{R}^n$, if $i$ misreports her location from $x_i$ to $x_i^{\prime}\in \mathnormal{R}$, a strategy-proof mechanism always satisfies:
  \[cost\left(f\left(x_i,\mathbf{x}_{-i}\right),x_i\right)\le cost\left(f\left(x_i^\prime,\mathbf{x}_{-i}\right),x_i\right).\]
\end{definition}

In mechanism design problems, we tend to only consider the deterministic, strategy-proof mechanisms with bounded approximation ratio, and we denote mechanisms that satisfy these conditions as \textit{nice mechanisms}.

In this paper all mechanisms we discuss are scale-free. Given an arbitrary location profile $\mathbf{x}=(x_1, x_2, ..., x_n)$ and a mechanism $f$, assuming that $f(\mathbf{x})=\{l_1, l_2\}$, if for all $\mathbf{x^\prime}=(a{x_1}+c, a{x_2}+c, ..., a{x_n}+c)$ where $a>0$ and $c\in \mathnormal{R}$, we have $f(\mathbf{x^\prime})=\{a{l_1}+c, a{l_2}+c\}$, we define $f$ as a scale-free mechanism. Without loss of generality, if a mechanism is scale-free, we can proportionally change the location profile reported by agents, without changing the relative location of facilities. We call a location profile $\mathbf{x}$ normalized if the location of the leftmost agent is 0 and the location of the rightmost agent is 1.

\section{Property of Nice Mechanisms for Two-facility Game on a Line}

We begin by demonstrating that any nice mechanism for the two-facility game on a straight line must adhere to a specific property: either at least one of the facility locations output by the mechanism is no greater than or no less than the reported location of all agents, or both facility locations are the same. 

This property is not a complete characterization that can characterize all strategy-proof mechanisms. However, it provides a detailed characterization of the mechanism's output, which has considerable value.
This property is the basis of various strategy-proof mechanisms we will show later, which 
can deepen our understanding of this problem to fully characterize its strategy-proof mechanisms. Furthermore, this property allows us to establish a linear lower bound for the consistency of this problem with prediction, this indicates that the problem does not produce sublinear outcomes even with prediction.

%This property is the basis for various strategy-proof mechanisms we will show later. Furthermore, this property allows us to establish a linear lower bound for the consistency of this problem with prediction.

\begin{theorem}
 If $f$ is a nice mechanism for the two-facility game on a straight line applied to $n \geq 5$ agents, for arbitrary instances $\mathbf{x}$, it must satisfy the following property: assuming that $f(\mathbf{x}) = \{l_1, l_2\}$, either there exists $l \in f(\mathbf{x})$ such that $l \geq \max \mathbf{x}$ or $l \leq \min \mathbf{x}$, or $l_1=l_2$.
\end{theorem}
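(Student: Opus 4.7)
I argue by contradiction. Suppose there exists an instance $\mathbf{x}$ with $f(\mathbf{x}) = \{l_1, l_2\}$ satisfying $\min \mathbf{x} < l_1 < l_2 < \max \mathbf{x}$. By the scale-freeness of $f$, I normalize so that $\min \mathbf{x} = 0$ and $\max \mathbf{x} = 1$, giving $0 < l_1 < l_2 < 1$. The overall strategy is to isolate a family of auxiliary instances whose optimal social cost is very small, use the bounded approximation ratio together with a triangle inequality to pin down their outputs near $\{0, 1\}$, and then transport this rigidity back to $\mathbf{x}$ via a chain of strategy-proof deviations.

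First, I analyze the auxiliary instance $\mathbf{y}_c = (0, c, c, \ldots, c, 1)$, with one agent at $0$, one at $1$, and the remaining $n - 2 \geq 3$ agents at a small $c > 0$. Placing facilities at $c$ and $1$ shows $OPT(\mathbf{y}_c) \leq c$, so $SC(f, \mathbf{y}_c) \leq \gamma c$ for the approximation ratio $\gamma$ of $f$. A triangle inequality then rules out a single facility being nearest to both $0$ and $1$ whenever $c < 1/\gamma$, since such a single facility $l^*$ would contribute $|l^*| + |1 - l^*| \geq 1$ to the social cost, exceeding $\gamma c$. Hence $f(\mathbf{y}_c) = \{a_c, b_c\}$ with $|a_c| \leq \gamma c$ and $|1 - b_c| \leq \gamma c$, so both facilities pin arbitrarily close to $\{0, 1\}$ as $c \to 0$.

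Next, I use two strategy-proofness consequences to transport this rigidity back to $\mathbf{x}$. The first is the standard bidirectional SP inequality, bounding an agent's true-location cost before and after any single-agent deviation. The second is an anchoring lemma: if an agent reports exactly a current facility's location, SP at the new instance forces the new output to include a facility at that reported location, since otherwise her truthful cost there would exceed the zero-cost deviation back to her original report. I move the middle agents of $\mathbf{x}$ one at a time toward $l_1$ (for anchoring) and then perturb toward $\mathbf{y}_c$. At each step the anchoring lemma and bidirectional SP, together with the bounded approximation ratio applied to the intermediate instance, constrain how the output can evolve. Iterating produces an output for $\mathbf{y}_c$ that retains structural information about $l_1 > 0$, contradicting the bound $|a_c| \leq \gamma c$ from the first step once $c < l_1 / (2\gamma)$.

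\textbf{Main obstacle.} The central technical difficulty is the chain of deviations: mechanisms are generally not continuous, so a single agent's re-report can cause the output to jump abruptly, and strategy-proofness alone does not pin down the trajectory. The argument must combine the bounded approximation ratio on each intermediate instance (to limit jump magnitudes) with the anchoring lemma (to lock one facility after each move). The hypothesis $n \geq 5$ is essential: it guarantees at least three movable middle agents in addition to the two boundary anchors at $0$ and $1$, so that the auxiliary instances $\mathbf{y}_c$ retain the favorable $OPT(\mathbf{y}_c) \leq c$ bound and the chain carries enough slack to force the contradiction.
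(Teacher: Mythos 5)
Your proposal has a genuine gap: the entire argument rests on the ``transport'' step (moving from $\mathbf{x}$ to the auxiliary instance $\mathbf{y}_c$ through a chain of single-agent deviations while retaining information about $l_1$), and that step is described as a plan rather than carried out. The difficulty you yourself flag as the main obstacle is not resolved. The anchoring lemma only works in one direction: if an agent moves \emph{to} a current facility location, that facility persists in the new instance. But your chain also requires moving agents \emph{away} from $l_1$ toward $c$, and after such a move strategy-proofness constrains only the deviating agent's own cost, not the facility positions; the output may jump arbitrarily subject to that one scalar inequality plus the approximation bound on the new instance. Concretely, $\mathbf{y}_c$ has no agent at $l_1$, so there is no mechanism-independent reason for $f(\mathbf{y}_c)$ to place a facility at or near $l_1$, and without a concrete invariant maintained across every link of the chain the claimed contradiction with $|a_c|\le\gamma c$ never materializes. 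Your first step (that $OPT(\mathbf{y}_c)\le c$ forces $f(\mathbf{y}_c)$ to put one facility within $\gamma c$ of $0$ and one within $\gamma c$ of $1$) is correct but is only the easy half.

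The paper closes exactly this gap by importing the Fotakis--Tzamos characterization rather than arguing from scratch. It first invokes the result that any nice mechanism for $n\ge 5$ is either the leftmost-rightmost mechanism or admits a unique dictator $j$; in the dictator case one facility is pinned at $x_j$ in \emph{every} instance, which is the invariant your chain lacks. It then moves all agents other than the dictator and the rightmost group onto $l_2$ (the direction in which the anchoring lemma does apply), obtaining a 3-location profile with a facility at the middle set, and swaps the dictator with the middle group to obtain a second 3-location profile, with a different middle set, that also gets a middle facility because the dictator's facility follows her. This contradicts the Fotakis--Tzamos lemma that for a fixed partition into three sets, all permutations receiving a middle facility must share the same middle set. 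If you want to avoid citing that characterization, you would essentially have to reprove it, which is a substantially longer argument than your sketch anticipates.
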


We will begin by introducing three lemmas that will be useful in proving Theorem 1.

\begin{lemma}[Theorem 4.1 \cite{fotakis2014power}]
If $f$ is a nice mechanism for the two-facility game applied to $n \geq 5$ agents, then either for all instances $\mathbf{x}$, $f(\mathbf{x}) = \{\min \mathbf{x}, \max \mathbf{x}\}$ or there exists a unique dictator $j$ such that for all $\mathbf{x}$, $x_j \in f(\mathbf{x})$.
\end{lemma}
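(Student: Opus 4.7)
The plan is to reduce to the unique-dictator case of Lemma 1. If $f$ is the leftmost--rightmost mechanism, then both $\min\mathbf{x}$ and $\max\mathbf{x}$ appear in $f(\mathbf{x})$ and the property holds trivially. Otherwise let $j$ be the unique dictator and write $l_1 = x_j$; if $x_j$ coincides with $\min\mathbf{x}$ or $\max\mathbf{x}$ then $l_1$ itself witnesses the property. The only nontrivial situation is $\min\mathbf{x} < x_j < \max\mathbf{x}$, where I assume toward contradiction that $l_2 \in (\min\mathbf{x}, \max\mathbf{x})$ and $l_2 \neq x_j$, and by the left--right reflection symmetry restrict to the case $l_2 > x_j$.

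The main tool is strategy-proofness applied to agents lying to the right of $l_2$. Pick an agent $k$ with $x_k > l_2$; her true cost is $x_k - l_2$. For any alternative report $x_k'$ the mechanism still outputs $\{x_j, l_2'\}$, and strategy-proofness gives $\min(|x_k - x_j|,|x_k - l_2'|) \geq x_k - l_2$; since $x_j < l_2 < x_k$ implies $x_k - x_j > x_k - l_2$, the binding constraint is $|x_k - l_2'| \geq x_k - l_2$, so $l_2'$ must lie at most $l_2$ or at least $2x_k - l_2$. By iterating this perturbation across different agents and reports, and combining with the two auxiliary lemmas I expect to be introduced next (which presumably codify monotonicity or closure properties of $l_2$ under single-agent deviations), the goal is to show $l_2$ cannot track any strictly interior location: its value must be pinned to data at or outside the convex hull of the profile, or coincide with $x_j$.

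The final contradiction comes from an approximation-ratio violation. Using scale-freeness, normalize $\min\mathbf{x} = 0$ and $\max\mathbf{x} = 1$, and construct a family of profiles (indexed by $n \geq 5$) with the dictator at a fixed interior point, one agent at $0$, and the remaining $n-2$ agents clustered at $1$. If $l_2$ is forced to remain strictly interior and away from $x_j$ on all profiles reachable by the above perturbations, then the clustered agents contribute social cost $\Omega(n)$ times a positive constant, while the placement $\{x_j,1\}$ realizes social cost bounded by an absolute constant, violating the bounded approximation ratio. The hardest step will be the middle one: lifting the single strategy-proofness inequality into a robust global constraint that rules out \emph{all} nontrivial interior placements of $l_2$ across enough profiles to drive the approximation-ratio contradiction. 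The $n \geq 5$ hypothesis is used both to invoke Lemma 1 and to furnish enough non-dictator agents for the three-location-profile machinery of the preliminaries and the extremal cluster.
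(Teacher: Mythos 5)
Your proposal does not prove the stated lemma; it assumes it. The statement in question is the Fotakis--Tzamos dichotomy itself (Theorem 4.1 of their paper, imported here as Lemma~1 without proof): every nice mechanism for $n \geq 5$ agents is \emph{either} the leftmost--rightmost mechanism \emph{or} admits a unique dictator. Your very first move --- ``If $f$ is the leftmost--rightmost mechanism\dots\ Otherwise let $j$ be the unique dictator'' --- presupposes exactly this dichotomy, so the argument is circular with respect to what you were asked to establish. Nothing in your sketch addresses why an arbitrary nice mechanism must fall into one of these two classes; that is the entire content of the lemma, and its actual proof (in the cited paper) is a long structural argument about deterministic strategy-proof mechanisms, proceeding through well-separated instances and an induction on the number of agents --- machinery that appears nowhere in your proposal.

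What you have actually sketched is a proof of the paper's Theorem~1: the refined claim that, in the unique-dictator case, the second facility $l_2$ satisfies $l_2 \leq \min \mathbf{x}$, $l_2 \geq \max \mathbf{x}$, or $l_2 = x_j$. Even judged against that target, your middle step is acknowledged to be incomplete (``lifting the single strategy-proofness inequality into a robust global constraint''), and the paper closes precisely this gap differently: it uses Lemma~2 (if $z \in f(x_i, \mathbf{x}_{-i})$ then $z \in f(z, \mathbf{x}_{-i})$) to collapse all agents except the dictator and the rightmost cluster onto $l_2$, producing a 3-location profile in which $l_2$ remains a facility, and then swaps the dictator with the middle cluster to obtain two permutations $\pi_1, \pi_2$ with $\pi_1(2) \neq \pi_2(2)$ that both place a facility at the middle set --- contradicting Lemma~3 (Corollary 6.1 of Fotakis--Tzamos). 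No approximation-ratio computation over a family of clustered profiles is needed; the contradiction is purely combinatorial once the 3-location machinery is invoked. To repair your submission you would either have to reproduce the Fotakis--Tzamos characterization argument for the dichotomy itself, or recognize that the lemma is cited, not proved, in this paper.
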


\begin{lemma}[Corollary 3.2 \cite{lu2010asymptotically}]
    Assuming that $f$ is a strategy-proof mechanism for two-facility game, with a location profile $\mathbf{x} = \langle x_i, \mathbf{x}_{-i}\rangle$, if $z \in f(\mathbf{x})$, then it must also be the case that $z \in f(z, \mathbf{x}_{-i})$.
\end{lemma}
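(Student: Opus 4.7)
The plan is to invoke the strategy-proofness axiom from the vantage point of a hypothetical agent whose true location equals $z$. First I would simply observe, without any mechanism-design reasoning yet, that because $z \in f(\mathbf{x})$ at the profile $\mathbf{x} = \langle x_i, \mathbf{x}_{-i}\rangle$, the cost experienced at location $z$ under this output is zero: the nearest facility distance $\min\{d(l_1, z), d(l_2, z)\}$ is $0$ since one of the two outputted points is $z$ itself.

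Next I would apply the definition of strategy-proofness in its contrapositive-flavored form. Imagine that agent $i$'s true location is $z$ (strategy-proofness is a universal statement over all possible true locations, so we are free to consider this case). If this hypothetical agent reports $x_i$ instead of her truthful $z$, the resulting profile is exactly $\langle x_i, \mathbf{x}_{-i}\rangle = \mathbf{x}$, and by the previous observation her incurred cost is $0$. Strategy-proofness then forces the truthful-report cost to be no larger, i.e.,
\[
\mathrm{cost}\bigl(f(z, \mathbf{x}_{-i}), z\bigr) \leq \mathrm{cost}\bigl(f(x_i, \mathbf{x}_{-i}), z\bigr) = 0.
\]
Since costs are non-negative, the left-hand side is exactly $0$, which is only possible if one of the two facility locations in $f(z, \mathbf{x}_{-i})$ coincides with $z$. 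That is precisely the conclusion $z \in f(z, \mathbf{x}_{-i})$.

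I do not anticipate a genuine obstacle here; the argument is a one-step application of the strategy-proofness inequality coupled with the nonnegativity of distances. The only place to be slightly careful is to frame strategy-proofness in the right direction—namely, comparing truthful report $z$ against misreport $x_i$—rather than the reverse, and to recognize that this holds for any agent $i$ regardless of that agent's actual true location, because the axiom quantifies universally over possible truth values.
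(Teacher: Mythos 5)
Your proof is correct and is essentially the paper's own argument: the paper phrases it as a contradiction (if $z \notin f(z,\mathbf{x}_{-i})$, an agent truly located at $z$ strictly benefits by misreporting $x_i$), while you run the same one-step strategy-proofness inequality directly, using $\mathrm{cost}\bigl(f(z,\mathbf{x}_{-i}),z\bigr) \leq \mathrm{cost}\bigl(f(x_i,\mathbf{x}_{-i}),z\bigr) = 0$ and nonnegativity of distances. The two formulations are logically identical, and your care about the direction of the inequality and the universal quantification over true locations is exactly right.
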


\begin{proof}
    If Lemma 2 is not satisfied, that is, there exists a location profile $\mathbf{x} = \langle x_i, \mathbf{x}_{-i}\rangle$, such that $z \in f(\mathbf{x})$ but $z \notin f(z, \mathbf{x}_{-i})$.
    In this case, agent $i$ can gain a benefit by misreporting her location as $x_i$ when her true location is $z$, the mechanism $f$ is not strategy-proof, which comes to a contradiction.
\end{proof}

%\textcolor{red}{Rewrite lemma 3.4 to make it more understandable}

\begin{lemma}[Corollary 6.1 \cite{fotakis2014power}]
        Suppose $f$ is a nice mechanism applied to any 3-location profile with $n\geq 3$ agents. Given three sets $\{N_1, N_2, N_3\}$, which is a partition of $N$. At most two permutations $\pi_1,\ {\pi}_2$ for $\{N_1, N_2, N_3\}$ exist such that $\pi_1\left(2\right)=\pi_2\left(2\right)$, for all 3-location profile $\mathbf{x}$ for $\{N_1,N_2,N_3\}$, assuming that $f(\mathbf{x})$ places a facility at the middle set, then $\mathbf{x}$ arranged according to either permutation $\pi_1$ or $\pi_2$. When $\mathbf{x}$ is arranged according to other permutations on the line, $f(\mathbf{x})=\{\min \mathbf{x},\max \mathbf{x}\}$.
\end{lemma}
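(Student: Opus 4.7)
The plan is to apply Lemma 1 and split into two cases. In the leftmost-rightmost case, $f(\mathbf{x}) = \{\min \mathbf{x}, \max \mathbf{x}\}$ by definition, and both facilities already lie at the extremes, so the required property is immediate. All the work is in the unique-dictator case.

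Assume $f$ has a unique dictator $j$ and suppose for contradiction that some instance $\mathbf{x}$ violates the conclusion: $f(\mathbf{x}) = \{x_j, l_2\}$ with $l_2 \neq x_j$ and both $x_j$ and $l_2$ strictly inside the open interval $(\min \mathbf{x}, \max \mathbf{x})$ (otherwise the conclusion already holds via $l = x_j$). Assume WLOG that $x_j < l_2$; the other subcase is symmetric. Because the dictator is an interior point, pick one fixed agent $i^* \neq j$ located at $\max \mathbf{x}$. Applying Lemma 2 one agent at a time, relocate every non-dictator, non-$i^*$ agent to the position $l_2$. Each single relocation to $l_2$ preserves the invariant $l_2 \in f$, so after all moves we obtain a profile $\mathbf{x}'$ with $l_2 \in f(\mathbf{x}')$; together with $x_j \in f(\mathbf{x}')$ (automatic by dictatorship) this forces $f(\mathbf{x}') = \{x_j, l_2\}$. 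By construction $\mathbf{x}'$ is a 3-location profile for the partition $N_1 = \{j\}$, $N_2 = N \setminus \{j, i^*\}$, $N_3 = \{i^*\}$, with spatial layout $x_j < l_2 < \max \mathbf{x}$.

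Now apply Lemma 3 to this partition. In $\mathbf{x}'$ the middle set is $N_2$, so its permutation satisfies $\pi_{\mathbf{x}'}(2) = 2$, and since $f(\mathbf{x}')$ contains the middle set's location $l_2$, this is an allowed permutation under Lemma 3. Next, construct an auxiliary 3-location profile $\mathbf{x}''$ for the same partition in which $N_1$ sits spatially in the middle; for example, place $N_2$ at $0$, $N_1$ at $1$, $N_3$ at $2$, yielding $\pi_{\mathbf{x}''}(2) = 1$. Because $j$ is a dictator, $x_j \in f(\mathbf{x}'')$ and this facility coincides with the middle set's position, so $\mathbf{x}''$ is also an allowed permutation. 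Lemma 3 then demands $\pi_{\mathbf{x}'}(2) = \pi_{\mathbf{x}''}(2)$, but $2 \neq 1$, a contradiction.

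The main obstacle is purely logistical: one must verify that the iterated Lemma 2 reduction really preserves $l_2 \in f$ at every intermediate step and handle small degenerate subcases (for instance, when some non-dictator agent already sits at $x_j$ or at $l_2$, that agent is either left in place or likewise moved to $l_2$, and when $\max \mathbf{x}$ is attained by multiple non-dictator agents, only one is kept as $i^*$ while the others join $N_2$). The assumption $n \geq 5$ is used to guarantee that after the reduction $|N_2| \geq 3$, so the three groups $N_1, N_2, N_3$ are all nonempty and $\mathbf{x}'$ is a genuine 3-location profile on which Lemma 3 may be invoked.
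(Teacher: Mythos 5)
Your proposal does not prove the statement at hand; it proves Theorem 1 instead. The statement is Lemma~3 itself (Corollary~6.1 of Fotakis--Tzamos): a characterization of which permutations of a 3-location profile can receive a facility at the middle set. Your argument explicitly invokes Lemma~3 (``Now apply Lemma~3 to this partition \dots Lemma~3 then demands $\pi_{\mathbf{x}'}(2)=\pi_{\mathbf{x}''}(2)$'') to obtain its contradiction, so as a proof of Lemma~3 it is circular. What you have reconstructed --- relocate every non-dictator, non-extreme agent onto $l_2$ via Lemma~2 to get a 3-location profile realizing the permutation $(1,2,3)$ with a facility at the middle set, then swap the dictator into the middle position to realize $(2,1,3)$, again with a facility at the middle set, contradicting the requirement $\pi_1(2)=\pi_2(2)$ --- is essentially verbatim the paper's proof of Theorem~1, in which Lemma~3 is used as a black box. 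In this paper Lemma~3 is imported by citation only; no proof of it appears, and a correct proof would have to reproduce the Fotakis--Tzamos analysis of strategy-proof mechanisms restricted to 3-location profiles (ruling out middle-set facilities for conflicting permutations through explicit coalition deviations of the sets $N_1,N_2,N_3$ combined with the bounded-approximation hypothesis), not reduce to the dictator/leftmost--rightmost dichotomy.

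Two hypothesis mismatches confirm the misidentification. First, Lemma~3 assumes only $n\geq 3$, whereas your argument leans on Lemma~1, which requires $n\geq 5$; your closing remark that ``the assumption $n\geq 5$ is used to guarantee $|N_2|\geq 3$'' is a condition foreign to Lemma~3 (and also not needed for the paper's Theorem~1 proof, where $|N_2|\geq 1$ suffices for the profile to be a genuine 3-location profile). Second, Lemma~3's conclusion is a statement about \emph{all} 3-location profiles for a fixed partition --- at most two permutations sharing the same middle index admit a middle-set facility, and all other permutations force $f(\mathbf{x})=\{\min\mathbf{x},\max\mathbf{x}\}$ --- while your conclusion is the nonexistence of an interior second facility, which is precisely Theorem~1's claim. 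The logical dependence in the paper runs Lemma~3 $\Rightarrow$ Theorem~1; your proposal runs it in the same direction but labels the output as Lemma~3, which leaves the target statement unproved.
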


We will now commence with the proof of Theorem 1.

\begin{proof}
    Lemma 1 implies that a nice mechanism $f$ for the two-facility game on a straight line has only two possibilities. If for all instances $\mathbf{x}$, $f(\mathbf{x}) = \{\min \mathbf{x}, \max \mathbf{x}\}$, then both facility locations satisfy $l \geq \max \mathbf{x}$ or $l \leq \min \mathbf{x}$, so the theorem holds. If $f$ admits a unique dictator $j$, we denote the location of the facility other than $x_j$ that $f$ outputs as $l_2$. We only need to prove that either $l_2 \geq \max \mathbf{x}$ or $l_2 \leq \min \mathbf{x}$, or $l_2 = x_j$.

    To prove that, we will use proof by contradiction. We assume there exists a nice mechanism $f$ which admits a unique dictator $j$, and for a location profile $\mathbf{x}$, $l_2$ satisfies $\min \mathbf{x} < l_2 < \max \mathbf{x}$ and $l_2 \neq x_j$. Then we will demonstrate that this leads to a contradiction.
    
    For a location profile $\mathbf{x}=(x_1,x_2,...,x_n)$, we define the set of agents on the leftmost location as $S_l$, and the set of agents on the rightmost location as $S_r$. Without loss of generality, we assume that $\min \mathbf{x} \leq x_j < l_2 < \max \mathbf{x}$. 

    We will move agents according to Lemma 2 and create a 3-location profile. Lemma 2 claims that when we move an agent to $l_2$ except for $S_r$ and $j$, the facility location $l_2$ is unchanged.
    By iterative moving all agents except for $S_r$ and $j$ to the location $l_2$, we obtain $\mathbf{x}' = (x_j:j,\ l_2:N/\{j,S_r\},\ \max \mathbf{x}:S_r)$. According to Lemma 2, $l_2 \in f(\mathbf{x}')$. Each location profile $\mathbf{x}$ corresponds to a unique location profile $\mathbf{x}'$, as shown in Figure 1.

    \begin{figure}[H]
    \centering
    \includegraphics[width=0.5\textwidth]{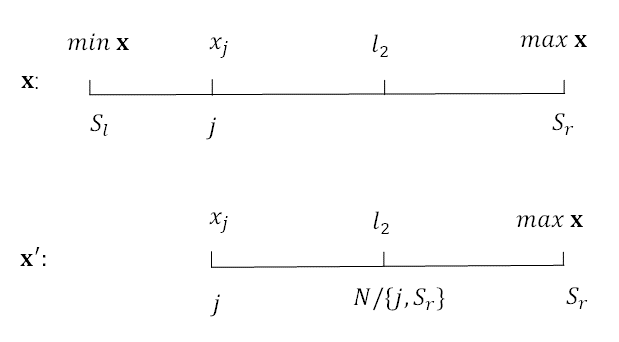}
    \caption{The correspondence between $\mathbf{x}$ and $\mathbf{x}^\prime$. The positions labeled above the line segment are the locations of the points, and below are the agents or the sets of agents located at those positions. There is a possibility that min $\mathbf{x}$ and $x_j$ overlap, in which case $j \in S_l$.}
    \label{figure3-1}
    \end{figure}

    %\textcolor{red}{Rewrite this paragraph}

    Given three sets $N_1={j}, N_2=N/(j,S_r), N_3={S_r}$, the location profile $\mathbf{x}^\prime$ we obtained above is a 3-location profile for $\{N_1,N_2,N_3\}$ with $l_2\in f(\mathbf{x}^\prime)$, and we know that $\mathbf{x}^\prime$ is arranged according to $\pi_1=(1, 2, 3)$. The dictator $j$ is located at position $x_j$ in $\mathbf{x}^\prime$. We exchange the locations of the agent $j$ and the agents set $N/(j,S_r)$, resulting in a new location profile $\mathbf{x}^{\prime\prime}=(l_2:j,x_j:N/(j,S_r),\max \mathbf{x}:S_r)$, as shown in Figure 2. We observed that $\mathbf{x}^{\prime\prime}$ is also a 3-location profile for $\{N_1,N_2,N_3\}$, and is arranged according to $\pi_2=(2,1,3)$. Since $j$ is the unique dictator admitted by mechanism $f$, in the location profile $\mathbf{x}^{\prime\prime}$, $j$ is located at $l_2$, so $l_2\in f(\mathbf{x}^{\prime\prime})$. Thus, we can arrive that $\pi_1(2)=2, \pi_2(2)=1, \pi_1(2)\neq\pi_2(2)$, and we observe that mechanism $f$ places a facility at the middle set for both $\pi_1$ and $\pi_2$. Lemma 3 claimed that given $\{N_1, N_2, N_3\}$, a nice mechanism $f$ will place a facility at the middle set for $\pi_1$ and $\pi_2$ if and only if $\pi_1(2)=\pi_2(2)$. This indicates that $f$ is not a nice mechanism, which leads to a contradiction.

    \begin{figure}[H]
    \centering    \includegraphics[width=0.5\textwidth]{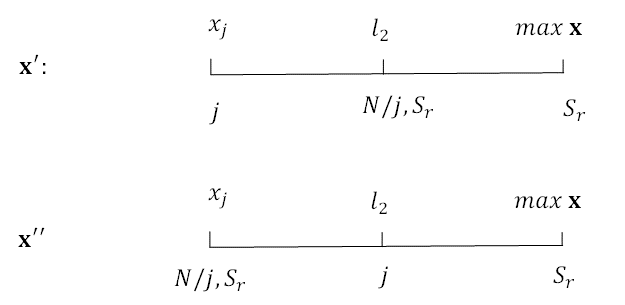}
    \caption{The correspondence between $\mathbf{x}^\prime$ and $\mathbf{x}^{\prime\prime}$. The positions labeled above the line segment are the locations of the points, and below are the agents or the sets of agents located at those positions.}
    \label{figure3-2}
    \end{figure}

    Therefore, there does not exist a nice mechanism $f$ that admits a unique dictator $j$,  such that for location profile $\mathbf{x}$, $\min \mathbf{x} \leq l_2 \leq \max \mathbf{x}$ and $l_2 \neq x_j$. That is, for all nice mechanisms $f$ that recognize a unique dictator $j$, given a location profile $\mathbf{x}$, either $l_2 \geq \max \mathbf{x}$ or $l_2 \leq \min \mathbf{x}$, or $l_2=x_j$.

    Theorem 1 is proven.
\end{proof}

\section{Two-Facility Game with Prediction}

We consider the prediction model which can directly predict the location profile of all agents. The prediction is obtained through an oracle (possibly some machine learning method) and cannot be manipulated by agents and the government. Mechanisms with prediction for this problem are a special kind of mechanism for this problem, so they must satisfy the characterization proved above.
Based on this characterization, we find that accurate prediction cannot effectively improve the consistency of the mechanism in the two-facility game problem. 
This is a strong negative result, indicating the mechanism can never achieve sublinear approximation ratios, even with accurate prediction of the locations of all agents.

In this model, the prediction is the location profile $\widetilde{\mathbf{x}}=({\widetilde{x}}_1,...,{\widetilde{x}}_n)$ given by an oracle, where ${\widetilde{x}}_i$ is the predicted location of agent $i$. The prediction is public information, and cannot be manipulated by agents and the government. For a mechanism $f$ with prediction, its input is the prediction and the reported location profile, and output is two facility locations. When the prediction is completely accurate, in other words, the prediction $\widetilde{\mathbf{x}}$ is exactly the real location profile of all agents, at which time the approximation ratio of the mechanism $f$ is no more than $\gamma$, then we define $\gamma$ as the consistency of mechanism $f$. For prediction $\widetilde{\mathbf{x}}$ with arbitrary any error, the approximation ratio of the mechanism $f$ is always no more than $\lambda$, then we define $\lambda$ as the robustness of mechanism $f$. We refer to the mechanism $f$ as $\gamma$-consistent, $\lambda$-robust.

\begin{theorem}
    If $f$ is a nice mechanism with prediction applied to the two-facility game on a line, and its robustness is bounded, then its consistency is $\Omega(n)$.
\end{theorem}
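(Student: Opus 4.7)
My plan is to fix the classical tight instance $\mathbf{x}^{*} = (0, 1, 1, \ldots, 1, 2)$ (one agent at $0$, $n-2$ agents at $1$, one agent at $2$), use it both as the prediction and as the reports, and show that any mechanism $f$ with bounded robustness must already pay social cost $\Omega(n)$ on $\mathbf{x}^{*}$. Since $OPT(\mathbf{x}^{*}) = 1$, this immediately yields consistency $\Omega(n)$.

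The starting observation is that for any fixed prediction $\tilde{\mathbf{x}}$, the reports-only mechanism $f(\tilde{\mathbf{x}}, \cdot)$ is strategy-proof by the prediction-model definition and has bounded approximation ratio by robustness, hence is a nice mechanism. Lemma 1 then says that $f(\tilde{\mathbf{x}}, \cdot)$ is either leftmost-rightmost or has a unique dictator $j$, and the refined characterization proved in (the proof of) Theorem 1 additionally tells us that in the dictator case the non-dictator facility on \emph{any} input $\mathbf{y}$ must lie at or below $\min \mathbf{y}$, at or above $\max \mathbf{y}$, or coincide with the dictator position.

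Setting $\tilde{\mathbf{x}} = \mathbf{x}^{*}$ and case-analyzing on the structure of $f(\mathbf{x}^{*}, \cdot)$: in the leftmost-rightmost case the output is $\{0, 2\}$ with social cost $n-2$; and in the dictator-at-endpoint case (dictator position $0$ or $2$) the refined characterization restricts the other facility enough to force social cost at least $n-2$ as well. Both sub-cases already give $\Omega(n)$ consistency on $\mathbf{x}^{*}$. The delicate case is when the dictator $j$ is a middle agent with $x^{*}_{j} = 1$, since then the mechanism can legally output $\{1, 0\}$ and meet $OPT(\mathbf{x}^{*}) = 1$. Ruling this sub-case out is the main obstacle, and where I invoke robustness.

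To rule it out, I construct a report $\mathbf{y}$ that differs from $\mathbf{x}^{*}$ only at the dictator's coordinate: $y_{j} = M$ for some $M \gg n$, with all other coordinates unchanged. Keeping the prediction $\mathbf{x}^{*}$, the dictator property gives $M \in f(\mathbf{x}^{*}, \mathbf{y})$, and the refined characterization applied to the nice mechanism $f(\mathbf{x}^{*}, \cdot)$ on input $\mathbf{y}$ forces the other facility $l_{2}$ to satisfy $l_{2} \leq 0 = \min \mathbf{y}$, $l_{2} \geq M = \max \mathbf{y}$, or $l_{2} = M$. A direct check of these three options shows that the $n-3$ middle agents at $1$ each pay distance at least $1$ (if $l_{2}$ is at or below $0$) or at least $M - 1$ (if $l_{2}$ is at or beyond $M$), so $SC(f(\mathbf{x}^{*}, \mathbf{y}), \mathbf{y}) \geq n - 3$. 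On the other hand $OPT(\mathbf{y}) = O(1)$ by placing facilities at $1$ and $M$, so the approximation ratio on report $\mathbf{y}$ is $\Omega(n)$, contradicting bounded robustness. Hence the middle-dictator sub-case cannot occur, and consistency must be $\Omega(n)$.
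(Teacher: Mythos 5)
There is a genuine gap, and it sits exactly where you flagged the ``delicate case.'' Your plan is to rule out a dictator located at the middle cluster by moving that dictator to $M$ and deriving an approximation ratio of $\Omega(n)$ on the perturbed report $\mathbf{y}$, ``contradicting bounded robustness.'' But in this paper \emph{bounded} robustness means finite, not constant: the benchmark mechanism the theorem is meant to cover is the $(2n-1)$-robust mechanism of \cite{xu2022mechanism}, and the paper's own nice mechanisms all have approximation ratio $\Theta(n)$. A ratio of $(n-3)/2$ on $\mathbf{y}$ is therefore perfectly consistent with the hypothesis, and the middle-dictator case is not eliminated. The problem is not just the bookkeeping but the choice of hard instance: on $\mathbf{x}^{*}=(0,1,\dots,1,2)$, Mechanism~1 run with a dictator at $1$ outputs $\{1,3\}$ and pays social cost $2$ against $OPT=1$, so there really are nice mechanisms with bounded robustness whose ratio on $\mathbf{x}^{*}$ is $O(1)$. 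No argument can extract an $\Omega(n)$ consistency lower bound from that single instance.

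The paper avoids this by choosing an instance on which Theorem~1 alone forces every nice mechanism to pay $\Omega(n)\cdot OPT$, with no case analysis on the dictator's position and no further appeal to robustness: put one agent at $0$, one at $1$, and two clusters of roughly $n/2$ agents at $\varepsilon$ and $1-\varepsilon$, so $OPT(\mathbf{x})=2\varepsilon$. Whatever the mechanism does, Theorem~1 says either both facilities coincide or one of them lies outside $[0,1]$; in either case at least one of the two large clusters is served by a facility at distance at least $\varepsilon$, giving $SC\geq\frac{n}{2}\varepsilon\geq\frac{n}{4}OPT(\mathbf{x})$. Your overall architecture (fix the prediction, observe that $f(\tilde{\mathbf{x}},\cdot)$ is a nice mechanism, apply the characterization) matches the paper's; to repair the proof you should replace your single-cluster instance with a two-cluster instance of this type so that the dictator, wherever she is, can rescue at most one cluster.
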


%Based on Theorem 3.1's characterization of the facility locations output by the mechanism, 
%to demonstrate Theorem 4.1, we only need to create a location profile, whose approximation ratio remains linear regardless of which case of Theorem 3.1 is applicable. Thus, regardless of how a nice mechanism utilizes prediction, its performance on this location profile will be linear, and the consistency cannot be sublinear.

%Theorem 3.1 shows that a nice mechanism will output two facilities, which satisfy either at least one of the facility locations is no greater than or no less than the reported location of all agents, or both facility locations are the same.
%It is important to note that when creating a location profile, it is better to distribute agents close to the leftmost and rightmost agents. This guarantees that at least one of the facility locations is not optimal, which will negatively impact the mechanism's performance.

\begin{proof}
    All the nice mechanisms with prediction applied to the two-facility game on a line need to comply with Theorem 1 since they are a special kind of mechanism for this problem.

    When the number of agents $n\geq5$ and $n$ is even, consider the location profile shown in Figure 3, where the locations $\epsilon$ and $1-\epsilon$ both have $n/2-1$ agents and the locations $0$ and $1$ both have an agent. We know that $\min{\mathbf{x}}=0$, $\max{\mathbf{x}}=1$. Let $\varepsilon\in(0,1/4)$ which is sufficiently small. We can see that $OPT(\mathbf{x})=2\varepsilon$, where the optimal facility location is $\varepsilon$ and $1-\varepsilon$.

    \begin{figure}[H]
    \centering
    \includegraphics[width=0.5\textwidth]{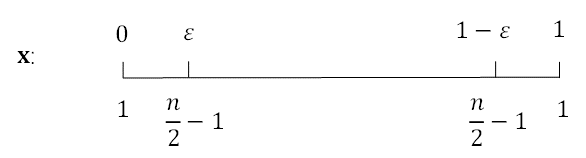}
    \caption{location profile $\mathbf{x}$. The labels below the line segment indicate the number of agents located at this position, while the labels above the line segment indicate the positions of the points.}
    \label{figure3-3}
    \end{figure}

    According to Theorem 1, let $f(\mathbf{x})=\{l_1,l_2\}$, where either $l_1=l_2$ or there exists $l\in f(\mathbf{x})$ such that $l\leq0$ or $l\geq1$. If $l_1=l_2$, there are at least $\frac{n}{2}$ agents have a cost greater than $\varepsilon$, then $SC(f,\mathbf{x})\geq\frac{n}{2}\varepsilon$, which implies approximate ratio is at least $\frac{n}{4}$. If there exists $l\in f(\mathbf{x})$ such that $l\leq0$ or $l\geq1$, then we have $SC(f,\mathbf{x})\geq\frac{n}{2}\varepsilon$, which implies a approximate ratio of at least $\frac{n}{4}$.

    When the number of agents $n\geq5$ and $n$ is odd, just let $\frac{n-3}{2}$ agents located in $\varepsilon$ and $\frac{n-1}{2}$ in $1-\varepsilon$, and the locations $0$ and $1$ both have an agent, this will not affect the result.

    For a mechanism with prediction, when the prediction is accurate, it means $\widetilde{\mathbf{x}}=({\widetilde{x}}_1,...,{\widetilde{x}}_n)$ given by an oracle is the same as $\mathbf{x}$, regardless of how this mechanism utilizes the prediction, the approximate ratio is at least $\frac{n}{4}$. This means the consistency of any mechanism with prediction is at least $\frac{n}{4}$, which is $\Omega(n)$.
    
    Theorem 2 is proven. 
\end{proof}
    
It follows from the result in this section that it is impossible to obtain a sublinear consistency by improving the mechanism with prediction proposed in \cite{xu2022mechanism}, which achieves $(1+\frac{n}{2})$-consistency and $(2n-1)$-robustness with prediction.
It is important to highlight that the use of prediction to improve the bad approximation ratio is applicable to many other mechanism design problems. Therefore, it is a strong negative result to show that prediction can not effectively improve the consistency of the mechanism for this problem. 

\section{Strategy-Proof Mechanisms}
The mechanism proposed in \cite{procaccia2013approximate} will put two facilities on the leftmost agent's location and the rightmost agent's location, where the approximation ratio is $n-2$. Mechanism 1, given below and first presented in \cite{lu2010asymptotically}, is a mechanism with a dictator that achieves a (2n-1) approximation ratio \cite{xu2022mechanism}.
These are the only two nice mechanisms we know for this problem. 

We first point out that the approximation ratio of Mechanism 1 is actually $n-1$, which is just a trivial extension of \cite{lu2010asymptotically}, and a small improvement on the approximation ratio of $(2n-1)$ \cite{xu2022mechanism}. See the proof in Appendix A. The lower bound of the approximation ratio for a nice mechanism is $\left(n-2\right)$ \cite{fotakis2014power}. Therefore, the approximation ratio of the mechanism proposed in \cite{lu2010asymptotically} is almost optimal.

\begin{mechanism}[h]
    \caption{\cite{lu2010asymptotically}}
    \label{mech:1}
    \begin{algorithmic}[1]
        \REQUIRE The location profile reported by agents $\mathbf{x}=(x_1,...,x_n)$ and a dictator $t$
        \ENSURE The locations of the two facilities
        
        \STATE Set $l_1\gets x_{t}$

        \STATE Define $d_A=\max\limits_{x_j\leq x_{t}}(x_{t}-x_j), d_B=\max\limits_{x_j\geq x_{t}}(x_j-x_{t})$

        \IF {$d_A\le d_B$}
        \STATE $l_2\gets l_1+\max\{2d_A,d_B\}$.
        \ELSE
        \STATE $l_2\gets l_1-\max\{d_A,2d_B\}$.
        \ENDIF
        \RETURN $l_1,\ l_2$
    \end{algorithmic}
\end{mechanism}

Assuming the unique dictator is the agent t. For a normalized location profile $\mathbf{x}$, let us keep the position of all agents except the dictator. If there exists a location $a\in(0, 1)$ such that for $\varepsilon\rightarrow0$, when agent $t$ at $a-\varepsilon$ and $t$ at $a+\varepsilon$, the $l_2$ locations are respectively $l_2\leq0\ (resp.\ l_2\geq1)$ and $l_2\geq1\ (resp.\ l_2\leq0)$, then $a$ is called a switching point of mechanism. Mechanism 1 is a scale-free nice mechanism and has only one switching point at $\frac{1}{2}$.

Based on the characterization result in section 3, we propose three new classes of mechanisms for this problem that have different features in terms of the position, number, and selection method of switching points. As we construct these mechanisms, we ensure that they comply with the characterization result, and we show that they are strategy-proof and have bounded approximation ratios. These mechanisms demonstrate the diversity of mechanisms for the two-facility game problem, which can help others get a more complete picture of the characterization of this problem, and also have significant value for understanding and solving more general facility game problems. 

To make the description and analysis of the mechanism simple, we only consider the normalized location profile in the paper. According to the definition of scale-free, it is easy to generalize the mechanism to the case of a general location profile. 
\vspace{-4pt}
\subsection{Extension 1: Switching Points with Different Position}

Here we present a class of scale-free nice mechanisms with only one switching point $a\in(0,1)$. It is Mechanism 2 given above, let $k\in[2,+\infty)$ be a parameter.

\begin{mechanism}[h]
    \caption{}
    \label{mech:2}
    \begin{algorithmic}[1]
        \REQUIRE The normalized location profile reported by agents $\mathbf{x}=(x_1,...,x_n)$, the unique dictator $t$, two constant parameter $k\in[2,+\infty)$ and $a\in(0,1)$.
        \ENSURE The locations of the two facilities
        \STATE Set $l_1\gets x_t$
        \IF {$x_t\in [0,a)$}
        \STATE Set $l_2\gets x_t+\max\{\frac{(1-a)k}{a}x_t,1-x_t\}$.
        \ELSE
        \STATE Set $l_2\gets x_t-\max\{x_t,\frac{ak}{1-a}(1-x_t)\}$.
        \ENDIF
        \RETURN $l_1,\ l_2$
    \end{algorithmic}
\end{mechanism}

 In the analysis of strategy-proofness, when some agents cheat so that the leftmost or rightmost position changes, the profile is no longer a normalized location profile. So we first describe the facility location output for general location profiles by Mechanism 2 given above. For a general location profile $\mathbf{x}$, we denote the leftmost agent as $x_l$, the rightmost agent as $x_r$, and $L=x_r-x_l$. 
\begin{itemize}[leftmargin=0.5cm]
    \item Case 1: 
    If the proportion of $x_t$ in $L$ is $[0,a)$, which means $x_t\in[x_l, x_l+aL)$, then $l_2=x_t+\max\{\frac{(1-a)k}{a}(x_t-x_l),x_r-x_t\}$. 
    \item Case 2: 
    If the proportion of $x_t$ in $L$ is $[a,1]$, which means $x_t\in[x_l+aL, x_r]$, then $l_2=x_t-\max\{x_t-x_l,\frac{ak}{1-a}(x_r-x_t)\}$.
\end{itemize}

\begin{theorem}
    Mechanism 2 is strategy-proof.
\end{theorem}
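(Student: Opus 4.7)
The plan is to verify strategy-proofness by a case analysis on the identity of the deviator. The dictator $t$ is immediate: truthful reporting yields $l_1=x_t$ and hence cost $0$, while any misreport $x_t'\neq x_t$ produces $l_1=x_t'$, and cost evaluated at the true $x_t$ is strictly positive. So $t$ has no profitable deviation.

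For a non-dictator agent $i$, the first observation is that $l_1=x_t$ is unaffected by $i$'s report, so a beneficial deviation must alter $l_2$. Since $l_2$ depends on $i$ only through $x_l$, $x_r$ and the case label, I would dispense first with interior manipulations: any $i$ with $x_l<x_i<x_r$ who reports $x_i'\in[x_l,x_r]$ leaves $x_l,x_r,L,l_2$ unchanged, so her cost is unchanged. It therefore suffices to consider reports that make $i$ a new extremal agent, or deviations by agents that were already extremal. By the left--right symmetry between Case~1 and Case~2, I would restrict attention to Case~1, where $x_t\in[x_l,x_l+aL)$ and $l_2\ge x_r$.

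Within Case~1 I would further split by which argument of the $\max$ defining $l_2$ is active. In the regime $\tfrac{(1-a)k}{a}(x_t-x_l)\ge x_r-x_t$, $l_2$ depends only on $x_l$, so only changes to $x_l$ move $l_2$; in the complementary regime $l_2=x_r$, so only changes to $x_r$ matter. In each subregime I would verify that Case-1-preserving deviations do not help by checking that $l_2$ moves in the direction unfavorable to the deviator: for instance, the leftmost agent shifting her report leftward only pushes $l_2$ further right, while shifting rightward keeps $l_2\ge x_r$ outside $[x_l,x_r]$; and the rightmost agent, when $l_2=x_r$, already enjoys cost $0$, so she cannot improve. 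Non-extremal agents who become extremal after a deviation are handled in the same subregime by computing the monotonic change of $l_2$ in $x_l$ or $x_r$.

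The main obstacle is to rule out deviations that cross the boundary between Case~1 and Case~2, where $l_2$ flips discontinuously from $\ge x_r$ to $\le x_l$. The critical instance is the leftmost agent reporting some $x_l'<x_l$ that triggers the mechanism to enter Case~2; the new facility becomes $l_2^{\mathrm{new}}=x_t-\max\{x_t-x_l',\tfrac{ak}{1-a}(x_r-x_t)\}$, lying on her left. Writing out the condition that her true-location cost strictly decreases, namely $l_2^{\mathrm{new}}$ lying in the open window $(2x_l-x_t,\,x_l)$, together with the Case-2 threshold $x_l'\le\tfrac{x_t-ax_r}{1-a}$, and eliminating the free parameters $x_l'$ and $x_r$, should reduce to the inequality $\tfrac{2}{k}>1$, contradicting $k\ge 2$. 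The analogous case-switch attempt by the rightmost agent is symmetric and uses the complementary coefficient $\tfrac{ak}{1-a}$. Pooling these subcases, together with the symmetric Case~2 analysis, yields the theorem.
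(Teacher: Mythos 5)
Your proposal follows essentially the same route as the paper's proof: dispose of the dictator and of interior reports, reduce everything to (effective) extremal deviations, use the $a\leftrightarrow 1-a$ symmetry to treat only the case $x_t\in[0,a)$, and kill the critical case-switching deviation of the leftmost agent by the computation that forces $k<2$ (the paper's version of this is the chain $\lvert l_2'\rvert \ge \frac{ak}{1-a}(1-x_t)-x_t \ge ak-x_t\ge a > x_t$). The organizational difference (splitting by which argument of the $\max$ is active rather than by the sign of the misreport) is cosmetic, and the key quantitative step you identify is correct.
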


\begin{proof}
    Without loss of generality, let us assume $x_t \in [0, a)$ for the given normalized location profile $\mathbf{x}$. The case where $x_t \in [a, 1]$ is similar. When $x_t \in [0, a)$, we know that $l_2 \geq 1$. When somebody misreports her location, we assume another facility location except for $x_t$ is ${l_2}^\prime$.

    The cost for the dictator $t$ is 0 in $f(\mathbf{x})$, so she cannot reduce her cost by any means.
    
    We first prove that the leftmost and rightmost agents cannot benefit from misreporting. We denote the leftmost agent as $l$ and the rightmost agent as $r$, and their reported locations are ${x_r}^\prime$ and ${x_l}^\prime$. For the reported location profile $\mathbf{x}$, the leftmost and rightmost location in $\mathbf{x}$ is $y_l$ and $y_r$. Note that $y_l$ is not necessarily ${x_l}^\prime$, because $l$ can misreport her location arbitrarily(resp.\ $r$). We know that $y_r - y_l = L$, and $L=1$ if $l$ and $r$ do not misreport. 
    
    For the leftmost agent $l$, if she misreports her location from 0 to ${x_l}^\prime$, consider the location profile $\langle\mathbf{x}_{-l}, {x_l}^\prime\rangle$. If ${x_l}^\prime>0$, then $y_l\geq 0$ and $y_r\geq 1$, $L<1$, the proportion of $x_t$ in $L$ is still $[0,a)$, ${l_2}^\prime \geq 1$, so she still needs to choose the facility in $x_t$, she cannot benefit. If ${x_l}^\prime<0$, then $y_l<0$ and $y_r=1$, $L>1$. When the proportion of $x_t$ in $L$ is still $[0, a)$, ${l_2}^\prime \geq 1$, she cannot benefit. When the proportion of $x_t$ in $L$ is $[a, 1]$, ${l_2}^\prime \leq 0$. Using the formula in Case 2, note that $x_r$ in the formula is $y_r=1$ and $x_l$ in the formula is $y_l={x_l}^\prime=y_r-L=1-L$. The cost for agent $l$ is 
    \begin{align*}
        &|x_t - \max\{x_t-(1-L), \frac{ak}{1-a}(1-x_t)\}| \\\geq &\max\{x_t, \frac{ak}{1-a}(1-a)\} - x_t
        \geq ak - x_t \geq a
    \end{align*}
    the validity of the first inequality is ensured by two conditions: firstly, $x_t-(1-L)=x_t+L-1>x_t$ holds because $L>1$, and secondly, $1-x_t>1-a$ holds because $x_t<a$. The second inequality holds because $x_t \in [0, a)$ and $k \geq 2$. Since $l_1 = x_t \in [0, a)$, the leftmost agent is closer to $l_1$ than ${l_2}^\prime$, so she cannot benefit.
    
    For the rightmost agent $r$, if she misreports her location from 1 to ${x_r}^\prime$, consider the location profile $\langle\mathbf{x}_{-r}, {x_r}^\prime\rangle$. If ${x_r}^\prime>1$, then since $y_r>1$ and $L>1$, the proportion of $x_t$ in $L$ is still $[0, a)$. Using the formula in Case 1, note that $x_l$ in the formula is $y_l=0$ and $x_r$ in the formula is $y_r={x_r}^\prime=L-{x_l}^\prime=L$. We have:
    \begin{align*}
    {l_2}^\prime &= x_t + \max\{\frac{(1-a)k}{a}x_t,L-x_t\} \\&\geq x_t + \max\{\frac{(1-a)k}{a}x_t,1-x_t\} = l_2
    \end{align*}
    so she cannot benefit. If ${x_r}^\prime<1$, then $y_r<1$. When the proportion of $x_t$ in $L$ is still $[0, a)$, using the formula in Case 1, note that $x_l$ in the formula is $y_l=0$ and $x_r$ in the formula is $y_r<1$. The cost for her is 
    $$|{l_2}^\prime - 1|= |x_t + \max\{\frac{(1-a)k}{a}x_t,y_r-x_t\} - 1|$$
    if $\frac{(1-a)k}{a}x_t\geq 1-x_t\geq y_r-x_t$, we have:
    \begin{align*}
        |{l_2}^\prime - 1|&\geq |x_t + \max\{\frac{(1-a)k}{a}x_t,y_r-x_t\} - 1|\\&=|x_t + \frac{(1-a)k}{a}x_t - 1|=|l_2-1|
    \end{align*}
    if $\frac{(1-a)k}{a}x_t< 1-x_t$, based on Mechanism 2 we know $l_2=1$, which is at the location of $r$, since she cannot benefit. When the proportion of $x_t$ in $L$ is $[a,1]$, based on Mechanism 2 we know ${l_2}^\prime \leq 0$, and she cannot benefit. Now we know that $l,r$ cannot benefit by misreporting.

    Then we prove that the remaining agents, except for the leftmost, rightmost agents, and the dictator, cannot benefit by misreporting. If agent $i$ is one of them, and she misreports her location from $x_i$ to ${x_i}^\prime$, consider the location profile $\langle\mathbf{x}_{-i}, {x_i}^\prime\rangle$. If ${x_i}^\prime \in [0,1]$, then $l_2$ will not move, so she cannot benefit. If ${x_i}^\prime>1$, it is the same as the case ${x_r}^\prime>1$. Because we have proved ${l_2}^\prime \geq l_2 \geq 1$, $i$ is closer to $l_2$ than ${l_2}^\prime$ on any location in $[0,1]$, so she cannot benefit. If ${x_i}^\prime<0$, it is the same as the case ${x_l}^\prime<0$. We have proved that if the proportion of $x_t$ in $L$ is $[a, 1]$, then ${l_2}^\prime \leq -a \leq 0$, and $i$ is closer to $x_t$ than ${l_2}^\prime$ on any location in $[0,1]$ since $x_t\in[0,a)$, so she cannot benefit. If the proportion of $x_t$ in $L$ is $[0, a)$, using the formula in Case 1, note that $x_r$ in the formula is $y_r={x_r}^\prime=1$ and $x_l$ in the formula is $y_l=y_r-L=1-L$. We have: 
    \begin{align*}
    {l_2}^\prime&=x_t+\max\{\frac{(1-a)k}{a}(x_t-(1-L)),1-x_t\} \\&\geq x_t+\max\{\frac{(1-a)k}{a}x_t,1-x_t\}=l_2\geq 1
    \end{align*}
    since $L>1$. So $i$ is closer to ${l_2}$ than ${l_2}^\prime$ on any location in $[0,1]$, she cannot benefit. Therefore, any remaining agent cannot benefit by misreporting in any case. Hence, we know that Mechanism 2 is strategy-proof.
\end{proof}

\begin{theorem}
The approximation ratio of Mechanism 2 is at most $max\{\frac{(1-a)k}{2a},\frac{ak}{2(1-a)}\}(n-1)$.
\end{theorem}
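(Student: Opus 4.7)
The plan is to bound $SC(f,\mathbf{x})/OPT(\mathbf{x})$ directly via a case analysis on the dictator's location and on which argument of the inner $\max$ is active. By the scale-free property I may restrict to normalized profiles (leftmost agent at $0$, rightmost at $1$), and by the inherent symmetry of Mechanism 2 under the reflection $x\mapsto 1-x$ together with the swap $a\leftrightarrow 1-a$, I assume without loss of generality that the dictator satisfies $x_t\in[0,a)$; the other case yields the $\frac{ak}{2(1-a)}(n-1)$ branch of the $\max$.

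Under this assumption the mechanism places $l_1=x_t$ and $l_2=x_t+D$ with $D:=\max\{\tfrac{(1-a)k}{a}x_t,\,1-x_t\}\ge 1-x_t$, so $l_2\ge 1$. To upper bound $SC$, I partition agents by $k_0:=|\{i:x_i\le x_t\}|$ and $k_1:=|\{i:x_i>x_t\}|$: each agent at $x_i\le x_t$ has cost at most $x_t$ (using $l_1$), while each agent at $x_i>x_t$ has cost at most $\min(D/2,\,1-x_t)$ (either the midpoint distance from $l_1$, or the distance from $l_2$ to the rightmost agent at $1$). Since the dictator contributes $0$,
\[SC(f,\mathbf{x})\le (k_0-1)\,x_t + k_1\min(D/2,\,1-x_t).\]

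For the OPT lower bound, the presence of agents at $0$, at $1$, and the dictator at $x_t\in(0,1)$ forces any two-facility placement to pay at least $\min(x_t,1-x_t)$ in total: a short case check on whether the two OPT facilities straddle $x_t$ or lie on one side shows that one of the three special positions is at distance at least $\min(x_t,1-x_t)$ from the nearer facility. Combining, in the illustrative subcase $\tfrac{(1-a)k}{a}x_t\ge 1-x_t$ with $D/2\le 1-x_t$ and $x_t\le 1/2$, I obtain
\[\frac{SC(f,\mathbf{x})}{OPT(\mathbf{x})}\le (k_0-1)+k_1\cdot\tfrac{(1-a)k}{2a}\le (n-1)\max\Big\{1,\,\tfrac{(1-a)k}{2a}\Big\},\]
and the remaining subcases (the other branch of $D$, the other branch of the $\min$, and $x_t\in[1/2,a)$ when $a>1/2$) yield the same bound by analogous calculations. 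A direct calculation using $k\ge 2$ shows that $\max\{\tfrac{(1-a)k}{2a},\tfrac{ak}{2(1-a)}\}\ge 1$, so the inner $\max$ with $1$ is absorbed and the ratio is at most $M(n-1)$ with $M=\max\{\tfrac{(1-a)k}{2a},\tfrac{ak}{2(1-a)}\}$.

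The main obstacle I anticipate is pinning down a sharp enough OPT lower bound and ruling out profiles with many interior agents that might individually have large mechanism costs yet admit small OPT. I expect the cleanest resolution to come from a monotonicity argument: shifting any non-extreme, non-dictator agent toward the nearer of $\{0,1\}$ weakly increases $SC(f)$ while weakly decreasing $OPT$, reducing the worst case to the extremal profile with all non-dictator agents placed at $\{0,1\}$. Handling Case B ($l_2=1$), where the mechanism degenerates to an asymmetric leftmost–rightmost placement, requires separate care, since agents near the midpoint $(x_t+1)/2$ must then be accounted against the OPT's own cost for serving those agents rather than against $\min(x_t,1-x_t)$ alone.
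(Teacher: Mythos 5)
Your core calculation is sound in the regime you work out, and your three-point lower bound $OPT(\mathbf{x})\ge\min(x_t,1-x_t)$ is correct. But the proof is incomplete precisely in the case you defer, and that case is not a technicality: when $\tfrac{(1-a)k}{a}x_t<1-x_t$ the mechanism sets $l_2=1$, and your per-agent bound for agents to the right of $x_t$ is $\min(D/2,1-x_t)=\tfrac{1-x_t}{2}$, which stays bounded away from $0$ while your $OPT$ lower bound $\min(x_t,1-x_t)=x_t$ tends to $0$. Concretely, with the dictator at $x_t=\epsilon$, one agent at $0$ and $n-2$ agents at $1$, your inequality gives $SC/OPT\le (k_0-1)+k_1\tfrac{1-\epsilon}{2\epsilon}$, which is unbounded; the actual ratio is $1$. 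The missing idea is to charge the far-side agents against $OPT$'s own cost of serving them rather than against $\min(x_t,1-x_t)$. The paper does this by partitioning agents into the clusters $\alpha,\beta$ served by $OPT_1,OPT_2$, using $OPT(\mathbf{x})\ge|I_\alpha|+|I_\beta|$ together with $x_t\le|I_\alpha|$, and observing that in the $l_2=1$ case $l_2\in I_\beta$, so each $\beta$-agent pays at most $|I_\beta|\le OPT(\mathbf{x})$. Some such cluster-width argument (or an equivalent) is indispensable; without it the theorem is not proved.

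The fallback you propose for this case also does not work as stated. The claim that moving a non-extreme, non-dictator agent toward the nearer of $\{0,1\}$ weakly increases $SC(f)$ is false: an agent at $0.9$ with facilities at $x_t=0.1$ and $l_2=1$ has cost $0.1$, and moving her to $1$ drops her cost to $0$, so the reduction to the extremal profile with all non-dictator agents at $\{0,1\}$ does not upper-bound the original instance. A lesser point: in the subcase $x_t\in[1/2,a)$ (possible only when $a>1/2$) you need $\tfrac{x_t}{1-x_t}\le\max\{\tfrac{(1-a)k}{2a},\tfrac{ak}{2(1-a)}\}$, which does hold via $\tfrac{x_t}{1-x_t}<\tfrac{a}{1-a}\le\tfrac{ak}{2(1-a)}$ for $k\ge2$, but note it is the \emph{other} branch of the outer $\max$ that absorbs it; this should be stated rather than left to ``analogous calculations.''
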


\begin{proof}

For a set of agents $N$ and a location profile $\mathbf{x}$, we assume that the facility locations in $\mathbf{x}$ with respect to $OPT$ are ${OPT}_1$ and ${OPT}_2$, without loss of generality, we assume that ${OPT}_1<{OPT}_2$. Let $\alpha$ be the set of agents that are closer to ${OPT}_1$ than ${OPT}_2$, and let $\beta$ be the set of agents that are closer to ${OPT}_2$ than ${OPT}_1$. Since the locations are on a line, the leftmost agent at $x_l=0$ will go to ${OPT}_1$, i.e., in the set $\alpha$, and the rightmost agent at $x_r=1$ will go to ${OPT}_2$, i.e., in the set $\beta$. If an agent has the same distance with ${OPT}_1$ and ${OPT}_2$, we stipulate that she is in set $\alpha$. Let $I_\alpha$ be the shortest line segment that covers all the locations of the agents in set $\alpha$, and let $I_\beta$ be the shortest line segment that covers all the locations of the agents in set $\beta$. Let $|I_\alpha|$ be the length of $I_\alpha$ and $|I_\beta|$ be the length of $I_\beta$. Since each agent must go to one facility, it follows that $\alpha \cup \beta = N$. Moreover, since no agent can go to both facilities, we have $\alpha \cap \beta = \emptyset$, and ${I}_\alpha$ and $I_\beta$ do not overlap. Therefore, we have $OPT(\mathbf{x}) \geq |I_\alpha| + |I_\beta|$.

We know that $a\in(0,1)$. Without loss of generality, we assume that the unique dictator $t$ in Mechanism 2 is in $\alpha$, i.e., $t\in\alpha$, and $x_t\leq |I_\alpha|$. For the case when $t\in\beta$, the analytical processes are similar to the case when $a^\prime=1-a$ and $t\in\alpha$. We discuss the approximation ratio in three cases as follows:

%, respectively: $l_2\in\beta$, $l_2\notin\beta$ and we let some agents choose $l_2$, we let all agents choose facility at $x_t$.

\begin{itemize}[leftmargin=0.5cm]
  
  \item Case 1: When $x_t\in[0,\frac{a}{(1-a)k+a})$, Mechanism 2 outputs the dictator location $x_t$ and another facility location $l_2=1$ and $l_2\in I_\beta$. For all agents $i\in\alpha$, let them choose the facility located at $x_t$ (if some of them choose $l_2$, their cost is no more than their distance to $x_t$, the same applies in the following proof), with a corresponding cost of $cost(f(\mathbf{x}),x_i)\leq |I_\alpha|\leq OPT(\mathbf{x})$. Conversely, let all agents in $\beta$ choose the facility located at $l_2$, with a corresponding cost of $cost(f(\mathbf{x}),x_i)\leq |I_\beta|\leq OPT(\mathbf{x})$. The sum of their costs is:
  
  \[\sum_{i\in\beta} cost(f(\mathbf{x}),x_i) \leq (n-1)OPT(\mathbf{x})\]
  
  \item Case 2: When $x_t\in[\frac{a}{(1-a)k+a},\frac{2a}{(1-a)k+2a})$, we know $x_t<a$ since $k\geq 2$. Mechanism 2 outputs the dictator location $x_t$ and another facility location $l_2=x_t+\frac{(1-a)k}{a}x_t$ satisfies $1\leq l_2\leq 1+(1-x_t)$, which is available. 
  For all agents $i\in\alpha$, let them choose the facility located at $x_t$, with a corresponding cost of $cost(f(\mathbf{x}),x_i)\leq |I_\alpha|\leq OPT(\mathbf{x})$. Conversely, let all agents in $\beta$ choose the facility near them, the distance from an agent $i\in \beta$ to the facility is no more than half of the distance between $x_t$ and $l_2$, i.e., $cost(f(\mathbf{x}),x_i)\leq \frac{l_2-x_t}{2} = \frac{(1-a)k}{2a}x_t \leq \frac{(1-a)k}{2a}|I_\alpha|\leq \frac{(1-a)k}{2a}OPT(\mathbf{x})$. As the cost of $t$ is 0, and the overall cost is no less than $OPT(\mathbf{x})$, the overall cost is given by:
  \[\sum_{i\in N} cost(f(\mathbf{x}),x_i)\leq \max\{\frac{(1-a)k}{2a}(n-1),1\}OPT(\mathbf{x})\]
  
  \item Case 3: When $x_t\in[\frac{2a}{(1-a)k+2a},1)$, let all agents choose the facility located at $x_t$. For agents $i\in\alpha$, their cost is $cost(f(\mathbf{x}),x_i)\leq |I_\alpha|\leq OPT(\mathbf{x})$. For all agents $i\in\beta$, their cost is $cost(f(\mathbf{x}),x_i)\leq 1-x_t\leq 1-\frac{2a}{(1-a)k+2a} \leq \{(1-\frac{2a}{(1-a)k+2a})/\frac{2a}{(1-a)k+2a}\}|I_\alpha| \leq \frac{(1-a)k}{2a}OPT(\mathbf{x})$ since $x_t\in I_\alpha$. The cost of $t$ is 0, and the overall cost is no less than $OPT(\mathbf{x})$, so the total cost is:
  
  \[\sum_{i\in N} cost(f(\mathbf{x}),x_i)\leq\max\{\frac{(1-a)k}{2a}(n-1),1\}OPT(\mathbf{x})\]
\end{itemize}

Now we have proved that the approximation ratio for the social cost of Mechanism 2 is $\max\{\frac{(1-a)k}{2a}(n-1),1\}$ when $x_t\in \alpha$, and when $x_t\in\beta$, similarly we can prove the approximation ratio for social cost is $\max\{\frac{ak}{2(1-a)}(n-1),1\}$, it can be easily proved by equivalently transform to the case when $a^\prime=a-1$, and $t\in\alpha$. Finally, the social cost is $\max\{\frac{(1-a)k}{2a},\frac{ak}{2(1-a)}\}(n-1)$. We notice that when $a=\frac{1}{2}$ and $k=2$, the approximation ratio is $n-1$, this mechanism is Mechanism 1.
\end{proof}

\vspace{-3pt}

\subsection{Extension 2: Multiple Switching Points}

In section 6, we presented a set of scale-free nice mechanisms, each with a unique switching point, by naturally extending Mechanism 1. We now show that not all scale-free nice mechanisms have a unique switching point, and in fact, a scale-free nice mechanism can have an infinite number of switching points.

We provide an example of a class of scale-free nice mechanisms that can have an infinite number of switching points.
In Mechanism 3 given above, when the location of dictator $x_t\in(\varepsilon,1-\varepsilon)$, $l_2$ can be located within either of two intervals, allowing for the possibility of any number of switching points.
We can see that in this case the facility $l_2$ is unavailable since the distance from any agent to $x_t$ is smaller than $1$, and any agent is always farther from $l_2$ than $x_t$.

\begin{mechanism}[h]
    \caption{}
    \label{mech:3}
    \begin{algorithmic}[1]
        \vspace{-1pt}
        \REQUIRE The normalized location profile reported by agents $\mathbf{x}=(x_1,...,x_n)$, the unique dictator $t$ and a constant parameter $\varepsilon\in(0,\frac{1}{2})$.
        \ENSURE The locations of the two facilities
        \STATE Set $l_1\gets x_t$
        \IF {$x_t\in [0,\varepsilon]$}
        \STATE Set $l_2\gets x_t+\max\{1-x_t,(\frac{2}{\varepsilon}-2)x_t\}$.
        \IF {$x_t\in [1-\varepsilon,1]$}
        \STATE Set $l_2\gets x_t-\max\{x_t,(\frac{2}{\varepsilon}-2)(1-x_t)\}$.
        \ELSE
        \STATE Arbitrarily pick $l_2$ from $l_2\in (-\infty,-1)\cup(2,+\infty)$. 
        \ENDIF
        \ENDIF
        \RETURN $l_1,\ l_2$
    \end{algorithmic}
    \vspace{-1pt}
\end{mechanism}

A general location profile should be normalized before inputting to mechanism 3, and we recover this location profile and the output result. We will describe the facility location of mechanism 3 output for general location profiles. For a general location profile $\mathbf{x}$, we denote the leftmost agent as $x_l$, the rightmost agent as $x_r$, and $L=x_r-x_l$. 
\begin{itemize}[leftmargin=0.5cm]
    \item Case 1: If the proportion of $x_t$ in $L$ is $[0,\varepsilon]$, which means $x_t\in[x_l, x_l+\varepsilon L]$, then $l_2=x_t+\max\{(\frac{2}{\varepsilon}-2)(x_t-x_l),x_r-x_t\}$. 
    \item Case 2: If the proportion of $x_t$ in $L$ is $(\varepsilon,1-\varepsilon)$, which means $x_t\in(x_l+\varepsilon L, x_l+(1-\varepsilon)L)$, then $l_2\in(-\infty,-L)\cup(2L,+\infty)$. 
    \item Case 3: If the proportion of $x_t$ in $L$ is $[1-\varepsilon,1]$, which means $x_t\in[x_l+(1-\varepsilon)L, x_r]$, then $l_2=x_t-\max\{x_t-x_l,(\frac{2}{\varepsilon}-2)(x_r-x_t)\}$.
\end{itemize}

\begin{theorem}
    Mechanism 3 is strategy-proof.
\end{theorem}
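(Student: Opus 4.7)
The plan is to demonstrate that no non-dictator agent can benefit from misreporting; the dictator has cost zero and hence cannot benefit. Since the output of Mechanism 3 depends only on $x_t$, the leftmost location $x_l$, and the rightmost location $x_r$, a misreport by a non-dictator can affect the output only by changing $x_l$ or $x_r$. In particular, any misreport by a non-extreme agent that stays strictly within $(x_l, x_r)$ leaves the output invariant and provides no benefit, so the genuinely interesting misreports are those by the leftmost agent, the rightmost agent, or an interior agent who jumps outside $[x_l, x_r]$.

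A crucial preliminary observation is that in Case 2 the output $l_2$ satisfies $|x_i - l_2| \geq L$ for every agent with true position $x_i \in [x_l, x_r]$, while $|x_i - x_t| \leq (1-\varepsilon)L < L$. Hence $l_2$ is unavailable, every agent uses $x_t$, and the arbitrariness of $l_2$ in Case 2 has no impact on any agent's cost. We may thus treat Case 2 as effectively a single-facility mechanism with facility $x_t$, which neutralizes the main source of technical ambiguity in the mechanism's definition.

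The core argument is then a case split on the case of the true profile and of the misreported profile. By the left-right symmetry of Mechanism 3, it suffices to consider true profiles in Case 1 and Case 2. Following the structure of the proof of Theorem 3, when a Case 1 true profile's misreport remains in Case 1, monotonicity of $l_2 = x_t + \max\{(\tfrac{2}{\varepsilon}-2)(x_t - x_l), x_r - x_t\}$ in $x_l$ and $x_r$ implies that the new $l_2'$ is no closer to any agent's true position than $l_2$ was. Transitions from Case 1 into Case 2 force every agent to use $x_t$, which only increases cost; transitions from Case 1 into Case 3 flip $l_2'$ to the opposite far side, which strictly worsens any agent who was previously exploiting $l_2$.

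The hard part will be the transition from a Case 2 true profile into a Case 1 (or, symmetrically, Case 3) misreported profile, where $l_2'$ newly becomes available and could potentially be close to some true agent location. The key estimate here is that Case 2 gives $x_t - x_l > \varepsilon L$, while new Case 1 gives $x_t - x_l' \leq \varepsilon L'$. There are two sub-cases: either the misreport extends $x_r$ rightward (so $x_l' = x_l$ and $L' > L$), or the original leftmost agent moves inward (so $x_l' > x_l$ and $x_r' = x_r$). In the first sub-case, substituting into the formula yields $l_2' - x_t > (\tfrac{2}{\varepsilon}-2)(x_t - x_l) > (2 - 2\varepsilon)L \geq 2(x_i - x_t)$ for every $x_i \in [x_l, x_r]$, so $|x_i - l_2'| \geq |x_i - x_t|$. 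In the second sub-case, one shows $l_2' \geq x_r$ directly, so the new $l_2'$ is at least as far from the original leftmost's true position as before. In both sub-cases the agent cannot benefit, completing the proof.
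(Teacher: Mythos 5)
Your overall architecture coincides with the paper's: the dictator's cost is zero, the output depends only on $x_t$ and the extreme reported positions, the Case~2 output $l_2$ is unavailable so that case is effectively a single facility at $x_t$, and the rest is a case analysis on the pair (case of the true profile, case of the misreported profile) using monotonicity and explicit distance bounds. The estimates you work out for the Case~2 $\to$ Case~1 transition are correct and correspond to the paper's Subcases 2.1--2.3.

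However, two of your justifications do not cover the actual danger. First, for Case~1 $\to$ Case~3 you argue that flipping $l_2'$ to the far left ``strictly worsens any agent who was previously exploiting $l_2$,'' but the agent who triggers this transition is one who reports far to the left, and her \emph{true} position may be near $x_l=0$: she was previously served by $x_t$, and the threat is precisely that the flipped $l_2'\le 0$ lands closer to her than $x_t$ is. Ruling this out needs the quantitative bound $x_i-l_2'\ge\left(\tfrac{2}{\varepsilon}-2\right)(1-x_t)-x_t\ge\varepsilon\ge x_t$ (this is exactly where the constant $\tfrac{2}{\varepsilon}-2$ in the mechanism is used; cf.\ the paper's Subcase~1.2), and your stated reason supplies nothing here. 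Second, your blanket claim that within Case~1 monotonicity makes ``the new $l_2'$ no closer to any agent's true position'' is false when an extreme agent moves \emph{inward}: if the rightmost agent reports $x_r'<1$ with $\left(\tfrac{2}{\varepsilon}-2\right)x_t<x_r'-x_t$, then $l_2'=x_r'<1=l_2$, which is strictly closer to every agent whose true position lies in $[x_t,x_r']$. The conclusion survives only because the misreporter herself (true position $1$, previously at distance $0$ from $l_2=1$) cannot gain; so the argument must be localized to the misreporting agent and checked per sub-case, as the paper does in Subcases 1.1--1.4, rather than derived from a global monotonicity statement.
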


\begin{proof}
    
    The cost for the dictator $t$ is 0 in $f(\mathbf{x})$, so she cannot reduce her cost by any means.

    Without loss of generality, let us assume $x_t \in [0,\varepsilon]$ or $x_t\in(\varepsilon,1-\varepsilon)$ for the given normalized location profile $\mathbf{x}$. The case where $x_t \in [1-\varepsilon,1]$ is similar to $x_t \in [0,\varepsilon]$. When $x_t \in [0,\varepsilon]$, we know that $l_2 \geq 1$. When somebody misreports her location, we assume another facility location except for $x_t$ is ${l_2}^\prime$. 

    We first prove that the leftmost and rightmost agents cannot benefit from misreporting. We denote the leftmost agent as $l$ and the rightmost agent as $r$, and their reported locations are ${x_r}^\prime$ and ${x_l}^\prime$. For the reported location profile $\mathbf{x}$, the leftmost and rightmost location in $\mathbf{x}$ is $y_l$ and $y_r$. Note that $y_l$ is not necessarily ${x_l}^\prime$, because $l$ can misreport her location arbitrarily(resp.\ $r$). We know that $y_r - y_l = L$, and $L=1$ if $l$ and $r$ do not misreport. 
    
    \begin{itemize}[leftmargin=0.5cm]
    \item Case 1: if $x_t\in[0,\varepsilon]$, we know that $l_2\geq 1$. 
    
    For the leftmost agent $l$, if she misreports her location from 0 to ${x_l}^\prime$, consider the location profile $\langle\mathbf{x}_{-l}, {x_l}^\prime\rangle$.
    
    \begin{itemize}[leftmargin=0cm]
    
        \item Subcase 1.1: If ${x_l}^\prime>0$, then $y_l>0$, $L<1$, the proportion of $x_t$ in $L$ is still $[0,\varepsilon]$, so ${l_2}^\prime \geq 1$, she cannot benefit. 
        
        \item Subcase 1.2: If ${x_l}^\prime<0$, then $y_l<0$, $L>1$. When the proportion of $x_t$ in $L$ is still $[0,\varepsilon]$, we have ${l_2}^\prime =x_t+\max\{(\frac{2}{\varepsilon}-2)(x_t-y_l),y_r-x_t\}$, since $y_r=x_r=1$ and $y_l=y_r-L=1-L<0$, we have ${l_2}^\prime =x_t+\max\{(\frac{2}{\varepsilon}-2)(x_t-(1-L)),1-x_t\}\geq l_2=x_t+\max\{(\frac{2}{\varepsilon}-2)x_t,1-x_t\}$, she cannot benefit. When the proportion of $x_t$ in $L$ is $(\varepsilon,1-\varepsilon)$, ${l_2}^\prime$ is unavailable, so she cannot benefit. When the proportion of $x_t$ in $L$ is $[1-\varepsilon,1]$, ${l_2}^\prime=x_t-\max\{x_t-(1-L),(\frac{2}{\varepsilon}-2)(1-x_t),x_t-(1-L)\}$ (as $y_r=1$ and $y_l=1-L$), and $x_l=0$, the distance from $x_l$ to ${l_2}^\prime$ is:
        \begin{align*}
            d(x_l,{l_2}^\prime)&=|x_t-\max\{x_t-(1-L),(\frac{2}{\varepsilon}-2)(1-x_t)\}| \\&
            \geq (\frac{2}{\varepsilon}-2)(1-x_t) - x_t 
            \geq(\frac{2}{\varepsilon}-2)(1-\varepsilon) - \varepsilon \geq \varepsilon
        \end{align*}
        since $x_t \in [0,\varepsilon]$ and $\varepsilon \in [0,\frac{1}{2}]$. Because $l_1 = x_t \in [0,\varepsilon]$, $d(l_1,x_l)\leq d({l_2}^\prime,x_l)$, the leftmost agent cannot benefit.
    \end{itemize}
    
    For the rightmost agent $r$, if she misreports her location from 1 to ${x_r}^\prime$, consider the location profile $\langle\mathbf{x}_{-r}, {x_r}^\prime\rangle$. 

    \begin{itemize}[leftmargin=0cm]
        \item Subcase 1.3:
        If ${x_r}^\prime>1$, then $y_r>1$ and $L>1$, the proportion of $x_t$ in $L$ is still $[0,\varepsilon]$, since $y_r=L>1$ and $y_l=0$,
        $${l_2}^\prime = x_t+\max\{y_r-x_t,(\frac{2}{\varepsilon}-2)(x_t-y_l)\} \geq x_t+\max\{1-x_t,(\frac{2}{\varepsilon}-2)x_t\} = l_2$$
        so she cannot benefit. 
        
        \item Subcase 1.4:
        If ${x_r}^\prime<1$, then $y_r<1$ and $L<1$. When the proportion of $x_t$ in $L$ is still $[0,\varepsilon]$, the cost for her is 
        $|{l_2}^\prime - 1| = |x_t+\max\{L-x_t,(\frac{2}{\varepsilon}-2)x_t\} - 1|$ (as mentioned in Subcase 1.3). 
        When $(\frac{2}{\varepsilon}-2)x_t\leq 1-x_t$, $l_2=1$, she cannot benefit from misreporting. When $(\frac{2}{\varepsilon}-2)x_t\geq 1-x_t \geq L-x_t$, we have:
        $$|{l_2}^\prime - 1| = |x_t+(\frac{2}{\varepsilon}-2)x_t - 1|=|l_2-1|$$

        since she cannot benefit. When the proportion of $x_t$ in $L$ is $(\varepsilon,1-\varepsilon)$, ${l_2}^\prime$ is unavailable, she cannot benefit. When the proportion of $x_t$ in $L$ is $[1-\varepsilon,1]$, ${l_2}^\prime\leq0$, she cannot benefit.
    \end{itemize}
    
    \item Case 2: if $x_t\in(\varepsilon,1-\varepsilon)$, we know that $l_2$ is unavailable.
      \begin{itemize}[leftmargin=0cm]
          \item Subcase 2.1:If ${x_l}^\prime>0$, then $y_l>0$ and $L<1$. If the proportion of $x_t$ in $L$ is still $(\varepsilon,1-\varepsilon)$, ${l_2}^\prime$ is still unavailable, so she cannot benefit. If the proportion of $x_t$ in $L$ is $[0,\varepsilon]$, ${l_2}^\prime \geq 1$, so she cannot benefit. 
        
         \item Subcase 2.2: If ${x_l}^\prime<0$, then $y_l<0$ and $L>1$. When the proportion of $x_t$ in $L$ is still $(\varepsilon,1-\varepsilon)$, ${l_2}^\prime$ is still unavailable, so she cannot benefit. When the proportion of $x_t$ in $L$ is $[1-\varepsilon,1]$, ${l_2}^\prime=x_t-\max\{x_t-y_l,(\frac{2}{\varepsilon}-2)(y_r-x_t)\}$, since $y_r=1$ and $y_l=y_r-L=1-L$, the distance from $x_l$ to ${l_2}^\prime$ is:
        \begin{align*}
            d(x_l,{l_2}^\prime)&=|x_t-\max\{x_t-(1-L),(\frac{2}{\varepsilon}-2)(1-x_t)\}| \\&
            \geq (\frac{2}{\varepsilon}-2)(1-x_t) - x_t 
            \geq(\frac{2}{\varepsilon}-2)\varepsilon - (1-\varepsilon) \geq 1-\varepsilon
        \end{align*}
        since $x_t \in (\varepsilon,1-\varepsilon)$ and $\varepsilon \in [0,\frac{1}{2}]$. Because $l_1 = x_t \in (\varepsilon,1-\varepsilon)$, $d(l_1,x_l)\leq d({l_2}^\prime,x_l)$,the leftmost agent cannot benefit.
        
        \item Subcase 2.3: If ${x_r}^\prime>1$, then $y_r>1$ and $L>1$. If the proportion of $x_t$ in $L$ is still $(\varepsilon,1-\varepsilon)$, ${l_2}^\prime$ is still unavailable, so she cannot benefit. If the proportion of $x_t$ in $L$ is $(0,\varepsilon)$, ${l_2}^\prime=x_t+\max\{y_r-x_t,(\frac{2}{\varepsilon}-2)(x_t-y_l)\}$, and $y_r=L$, $y_l=0$,
        \begin{align*}
            d(x_r,{l_2}^\prime)&=x_t+\max\{L-x_t,(\frac{2}{\varepsilon}-2)x_t\}-1\\&
            \geq x_t+(\frac{2}{\varepsilon}-2)x_t-1 
            \geq\varepsilon+(\frac{2}{\varepsilon}-2)\varepsilon-1 \geq 1-\varepsilon
        \end{align*}
        since $x_t \in (\varepsilon,1-\varepsilon)$ and $\varepsilon \in [0,\frac{1}{2}]$. Because $l_1 = x_t \in (\varepsilon,1-\varepsilon)$, $d(l_1,x_r)\leq d({l_2}^\prime,x_r)$, the rightmost agent cannot benefit.

        \item Subcase 2.4: If ${x_r}^\prime<1$, then $y_r<1$ and $L<1$. If the proportion of $x_t$ in $L$ is still $(\varepsilon,1-\varepsilon)$, ${l_2}^\prime$ is still unavailable, so she cannot benefit. If the proportion of $x_t$ in $L$ is $[\varepsilon,1]$, ${l_2}^\prime \leq 0$, so she cannot benefit. 
      \end{itemize}
      We have proved that $l, r$ cannot benefit from misreporting.
    \end{itemize}
    
    Then we prove that the remaining agents, except for the leftmost, rightmost agents, and the dictator, cannot benefit from misreporting. If agent $i$ is one of them, and she misreports her location from $x_i$ to ${x_i}^\prime$, consider the location profile $\langle\mathbf{x}_{-i}, {x_i}^\prime\rangle$. 
    \begin{itemize}[leftmargin=0.5cm]
        \item Case 1: If ${x_i}^\prime \in [0,1]$, then $l_2$ will not move, so she cannot benefit. 
        \item Case 2: If ${x_i}^\prime>1$, it is the same as the case ${x_r}^\prime>1$. We have proved that when $x_t \in [0,\varepsilon]$, ${l_2}^\prime\geq l_2\geq 1$; when $x_t \in (\varepsilon,1-\varepsilon)$, ${l_2}^\prime$ is always unavailable, so agent $i$ cannot benefit.
        \item Case 3: If ${x_i}^\prime<0$, it is the same as the case ${x_l}^\prime<0$. We have proved that when $x_t \in [0,\varepsilon]$, either ${l_2}^\prime \geq l_2 \geq 1$, or ${l_2}^\prime$ is unavailable; when $x_t \in (\varepsilon,1-\varepsilon)$, ${l_2}^\prime$ is always unavailable, so agent $i$ cannot benefit.
    \end{itemize}
    Therefore, agent $i$ cannot benefit from misreporting. Hence, we proved that the remaining agents, except for the leftmost, rightmost agents, and the dictator, cannot benefit from misreporting. Mechanism 2 is strategy-proof.
\end{proof}

\begin{theorem}
    The approximation ratio of Mechanism 3 is at most $(\frac{1}{\varepsilon}-1)(n-1)$.
\end{theorem}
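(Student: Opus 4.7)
The plan is to mirror the social-cost argument already used in the proof of Theorem 4, and then split into cases according to which branch of Mechanism 3 fires. Fix a normalized profile $\mathbf{x}$ and let $OPT_1 < OPT_2$ be the optimal facility locations. Partition the agents into $\alpha$ (closer to $OPT_1$) and $\beta$ (closer to $OPT_2$), with shortest covering intervals $I_\alpha$ and $I_\beta$, so that $OPT(\mathbf{x}) \geq |I_\alpha| + |I_\beta|$. The leftmost agent (at $0$) lies in $\alpha$ and the rightmost (at $1$) lies in $\beta$. Without loss of generality I take the dictator $t \in \alpha$, which gives $x_t \leq |I_\alpha|$ because both $0$ and $x_t$ belong to $I_\alpha$; the case $t \in \beta$ follows by reflecting the line. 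The cost of $t$ is $0$, so the social cost is split into a contribution from $\alpha\setminus\{t\}$ and one from $\beta$, and I bound each by $(1/\varepsilon-1)\,OPT(\mathbf{x})$.

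The analysis then branches on which of the three pieces of Mechanism 3 applies to $x_t$. When $x_t \in [0,\varepsilon]$ and $1-x_t \geq (\tfrac{2}{\varepsilon}-2)x_t$, we have $l_2 = 1$, which is the right endpoint of $I_\beta$, so every agent in $\alpha$ has cost at most $|I_\alpha|$ (using $l_1=x_t$) and every agent in $\beta$ has cost at most $|I_\beta|$, yielding a bound of $(n-1)\,OPT(\mathbf{x})$. When $x_t \in [0,\varepsilon]$ and $(\tfrac{2}{\varepsilon}-2)x_t > 1-x_t$, we get $l_2 = x_t + (\tfrac{2}{\varepsilon}-2)x_t > 1$. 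Agents in $\alpha$ still pay at most $|I_\alpha|$ at $l_1$; an agent in $\beta$ lies between $x_t$ and $l_2$ (or to the right of $l_2$, which I will handle separately but similarly), so its cost is at most $(l_2-x_t)/2 = (\tfrac{1}{\varepsilon}-1)x_t \leq (\tfrac{1}{\varepsilon}-1)|I_\alpha| \leq (\tfrac{1}{\varepsilon}-1)\,OPT(\mathbf{x})$. Summing over $n-1$ agents gives the claimed ratio.

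The main obstacle is the middle regime $x_t \in (\varepsilon,1-\varepsilon)$, where $l_2$ is unavailable and every agent must use $l_1 = x_t$. Agents in $\alpha$ still pay at most $|I_\alpha| \leq OPT(\mathbf{x})$, but agents in $\beta$ can pay up to $1-x_t$, which can be as large as $1-\varepsilon$. The key observation that saves the ratio is that $|I_\alpha| \geq x_t > \varepsilon$, so
\[
\frac{1-x_t}{OPT(\mathbf{x})} \;\leq\; \frac{1-x_t}{|I_\alpha|} \;\leq\; \frac{1-x_t}{x_t} \;\leq\; \frac{1-\varepsilon}{\varepsilon} \;=\; \frac{1}{\varepsilon}-1,
\]
which again bounds every individual cost by $(\tfrac{1}{\varepsilon}-1)\,OPT(\mathbf{x})$. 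The remaining range $x_t \in [1-\varepsilon,1]$ is then obtained by the same reflection used to justify the WLOG. Together these cases give $SC(f,\mathbf{x}) \leq (\tfrac{1}{\varepsilon}-1)(n-1)\,OPT(\mathbf{x})$, completing the proof.
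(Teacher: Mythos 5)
Your proof follows essentially the same route as the paper's: the same $\alpha/\beta$ decomposition with $OPT(\mathbf{x}) \ge |I_\alpha|+|I_\beta|$, the same case split on the three branches of Mechanism~3, and the same key inequalities ($\mathrm{cost} \le (l_2-x_t)/2 = (\tfrac{1}{\varepsilon}-1)x_t \le (\tfrac{1}{\varepsilon}-1)|I_\alpha|$ in the left branch, and $\mathrm{cost}\le 1-x_t$ together with $|I_\alpha|\ge x_t>\varepsilon$ in the middle branch). The one loose end is your dismissal of the range $x_t\in[1-\varepsilon,1]$ ``by the same reflection used to justify the WLOG'': that reflection maps the configuration $(t\in\alpha,\ x_t\in[1-\varepsilon,1])$ to $(t\in\beta,\ x_t\in[0,\varepsilon])$, which you have not treated directly either, so the symmetry argument is circular --- you have already spent the one free reflection on fixing $t\in\alpha$, and must then handle all three ranges of $x_t$ explicitly. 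The fix is immediate with the tools you already have: when $x_t\in[1-\varepsilon,1]$ and $t\in\alpha$, every agent of $\beta$ lies in $[x_t,1]$ and must use $l_1=x_t$ (since $l_2\le 0$), so her cost is at most $1-x_t\le\varepsilon\le 1-\varepsilon\le x_t\le|I_\alpha|\le OPT(\mathbf{x})$, which is exactly the paper's Case~3 and gives the even better bound $(n-1)\,OPT(\mathbf{x})$ there.
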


\begin{proof}
For a set of agents $N$ and a location profile $\mathbf{x}$, we assume that the facility locations in $\mathbf{x}$ with respect to $OPT$ are ${OPT}_1$ and ${OPT}_2$. Without loss of generality, we assume that $0\leq{OPT}_1<{OPT}_2\leq1$. Let $\alpha$ be the set of agents near ${OPT}_1$, and let $\beta$ be the set of agents near ${OPT}_2$. Since the locations are on a line, the leftmost agent at $x_l=0$ will definitely go to ${OPT}_1$, i.e., in the set $\alpha$, and the rightmost agent at $x_r=1$ will definitely go to ${OPT}_2$, i.e., in the set $\beta$. If an agent has the same distance with ${OPT}_1$ and ${OPT}_2$, we stipulate that she is in set $\alpha$. Let $I_\alpha$ be the shortest line segment that covers all the locations of the agents in set $\alpha$, and let $I_\beta$ be the shortest line segment that covers all the locations of the agents in set $\beta$. Let $|I_\alpha|$ be the length of $I_\alpha$ and $|I_\beta|$ be the length of $I_\beta$. Since each agent must go to one facility, it follows that $\alpha \cup \beta = N$. Moreover, since no agent can go to both facilities, we have $\alpha \cap \beta = \emptyset$, and ${I}_\alpha$ and $I_\beta$ do not overlap. Therefore, we have $OPT(\mathbf{x}) \geq |I_\alpha| + |I_\beta|$.

Without loss of generality, we assume that the unique dictator $t$ in mechanism 3 is in $\alpha$, i.e., $t\in\alpha$, and $x_t\leq |I_\alpha|$. For the case when $t\in\beta$, the analytical processes are similar to the case when $a^\prime=1-a$ and $t\in\alpha$.

\begin{itemize}[leftmargin=0.5cm]
  
  \item Case 1: 
  When $x_t\in[0,\varepsilon]$, mechanism 3 outputs the dictator location $x_t$ and another facility $l_2$, whose location is greater than or equal to 1. For agent $i \in\alpha$, let her choose the facility located at $x_t$ (if some of them choose $l_2$, their cost is no more than their distance to $x_t$, the same applies in the following proof), with a corresponding cost $cost(f(\mathbf{x}),x_i)\leq |I_\alpha|\leq OPT(\mathbf{x})$. Conversely, let agent $i\in\beta$ choose the facility located at $l_2$. When  $l_2=1$ and $l_2\in I_\beta$, with a corresponding cost of $cost(f(\mathbf{x}),x_i)\leq |I_\beta|\leq OPT(\mathbf{x})$. In the case where $l_2>1$, the distance from an agent to facility is no more than half of the distance between $x_t$ and $l_2$, i.e., $cost(f(\mathbf{x}),x_i)\leq \frac{x_{l_2}-x_{t}}{2}\leq (\frac{1}{\varepsilon}-1)x_t \leq (\frac{1}{\varepsilon}-1)|I_\alpha|\leq (\frac{1}{\varepsilon}-1)OPT(\mathbf{x})$. As the cost of $t$ is 0, the overall cost is given by:
  \[\sum_{i\in\mathbf{N}} cost(f(\mathbf{x}),x_i)\leq (\frac{1}{\varepsilon}-1)(n-1)OPT(\mathbf{x})\]
  
  \item Case 2: When $x_t\in(\varepsilon,1-\varepsilon)$, Mechanism 3 outputs the dictator location $x_t$ and another facility $l_2$ which is unavailable. Let all agents choose the facility located at $x_t$. For agent $i\in\alpha$, her cost is $cost(f(\mathbf{x}),x_i)\leq |I_\alpha|\leq OPT(\mathbf{x})$. For agent $i\in\beta$, her cost is $cost(f(\mathbf{x}),x_i)\leq 1-\varepsilon \leq \frac{1-\varepsilon}{\varepsilon}|I_\alpha| \leq (\frac{1}{\varepsilon}-1)OPT(\mathbf{x})$. The cost of $t$ is 0, so the total cost is:
  
  \[\sum_{i\in\mathbf{N}} cost(f(\mathbf{x}),x_i)\leq (\frac{1}{\varepsilon}-1)(n-1)OPT(\mathbf{x})\]

  \item Case 3: When $x_t\in[1-\varepsilon,1]$, Mechanism 3 outputs the dictator location $x_t$ and another facility location $l_2\leq0$. Let all agents choose the facility located at $x_t$. For agent $i\in\alpha$, her cost is $cost(f(\mathbf{x}),x_i)\leq |I_\alpha|\leq OPT(\mathbf{x})$. For agent $i\in\beta$, their cost is $cost(f(\mathbf{x}),x_i)\leq \varepsilon \leq 1-\varepsilon\leq I_\alpha \leq OPT(\mathbf{x})$. The cost of $t$ is 0, so the total cost is:
  
  \[\sum_{i\in\mathbf{N}} cost(f(\mathbf{x}),x_i)\leq (n-1)OPT(\mathbf{x})\]
\end{itemize}

So we have proved that the approximation ratio for the social cost of Mechanism 3 is $(\frac{1}{\varepsilon}-1)(n-1)$. When $\varepsilon=\frac{1}{2}$, the approximation ratio is $n-1$, Mechanism actually is Mechanism 1.
\end{proof}

\subsection{Extension 3: Switching Points Related to All Agents}

The location of $l_2$ output by the mechanisms described above only depends on the leftmost agents, the rightmost agents, and the dictator. In this section, we will provide a scale-free nice mechanism whose location of $l_2$ not only depends on the leftmost and rightmost agents, and the dictator but also the location of other agents.
For simplicity, we will first provide a deterministic mechanism whose location of $l_2$ depends on the leftmost agents, rightmost agents, the dictator, and the position of only one agent. Then we prove it is strategy-proof and has a bounded approximation ratio. Afterward, we will give a nice mechanism that the location of $l_2$ is correlated with all other agents, and the proof method of strategy-proofness and approximation ratio will be similar, but we omit the proof.

We can see that in Mechanism 4 given above, the position of $l_2$ is correlated with the relative position of the agent $i$ and dictator $t$. When $x_i\leq x_t$, Mechanism 4 is a specific case of Mechanism 2 when $k=2$ and $a\in(0,\frac{1}{2})$; and when $x_i>x_t$, it is a specific case of Mechanism 2 when $k=2$ and $a^\prime=1-a$.

\begin{mechanism}[h]
    \caption{}
    \label{mech:4}
    \begin{algorithmic}[1]
        \vspace{-1pt}
        \REQUIRE The normalized location profile reported by agents $\mathbf{x}=(x_1,...,x_n)$, the unique dictator $t$ and agent $i$, $i\neq t$.
        \ENSURE The locations of the two facilities
        \STATE Let a be a constant satisfies $a\in(0,\frac{1}{2})$
        \IF {$x_i\leq x_t$}
            \IF {$x_t\in [0,a)$}
            \STATE Set $l_2\gets x_t+\max\{\frac{2(1-a)}{a}x_t,1-x_t\}$.
            \ELSE
            \STATE Set $l_2\gets x_t-\max\{x_t,\frac{2a}{1-a}(1-x_t)\}$.
            \ENDIF
        \ELSE
            \IF {$x_t\in [0,1-a)$}
            \STATE Set $l_2\gets x_t+\max\{\frac{2a}{1-a}x_t,1-x_t\}$.
            \ELSE
            \STATE Set $l_2\gets x_t-\max\{x_t,\frac{2(1-a)}{a}(1-x_t)\}$.
            \ENDIF
        \ENDIF
        \RETURN $l_1,\ l_2$
    \end{algorithmic}
\vspace{-1pt}
\end{mechanism}
\vspace{-2pt}

\addtocounter{subsection}{1}

\begin{theorem}
    Mechanism 4 is strategy-proof.
\end{theorem}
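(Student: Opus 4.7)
The plan is to reduce strategy-proofness of Mechanism~4 to that of Mechanism~2, which Theorem~3 already establishes. Observe that Mechanism~4 runs exactly as Mechanism~2 with parameters $k=2$ and $a$ whenever $x_i\le x_t$, and as Mechanism~2 with parameters $k=2$ and $a'=1-a$ whenever $x_i>x_t$. Denote these sub-mechanisms by $M_a$ and $M_{a'}$. Both are strategy-proof by Theorem~3, so the only genuinely new incentive question is whether agent~$i$ can gain by crossing $x_t$ and flipping the regime.

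The dictator $t$ has cost $0$, so she cannot benefit. Any agent $j\ne i,t$ cannot alter whether $x_i\le x_t$ by a unilateral deviation, so $j$ faces a fixed sub-mechanism and Theorem~3 applies. For agent~$i$ herself, any deviation ${x_i}'$ on the same side of $x_t$ as her true location $x_i$ preserves the regime and again reduces to Theorem~3. The substantive case is a cross-regime deviation by~$i$.

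For the cross-regime deviation I would split on the location of $x_t$ into the three windows $[0,a)$, $[a,1-a)$, and $[1-a,1]$, which is possible because $a<\frac{1}{2}$ keeps the middle window non-empty. When $x_t\in[0,a)$, both $M_a$ and $M_{a'}$ produce $l_2\ge 1$: if $x_i\le x_t$ truthfully, each regime leaves the cost at $x_t-x_i$; if $x_i>x_t$ truthfully, the truthful $l_2$ (from $M_{a'}$, coefficient $\frac{2a}{1-a}$) is no larger than the misreported $l_2'$ (from $M_a$, coefficient $\frac{2(1-a)}{a}$), since $\frac{2(1-a)}{a}>\frac{2a}{1-a}$ when $a<\frac{1}{2}$, and therefore $\min(x_i-x_t,\,l_2'-x_i)\ge\min(x_i-x_t,\,l_2-x_i)$. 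The case $x_t\in[1-a,1]$ is the mirror image with both $l_2\le 0$. The main obstacle is the window $x_t\in[a,1-a)$, where $M_a$ and $M_{a'}$ place $l_2$ on \emph{opposite} sides of $[0,1]$: under $x_i\le x_t$ truthfully the mechanism gives $l_2\le 0$, while under the deviation ${x_i}'>x_t$ it gives $l_2'\ge 1$. Since $l_2'\ge 1>x_t\ge x_i$, the deviation cost collapses to $x_t-x_i$, which already dominates the truthful cost $\min(x_t-x_i,\,x_i-l_2)$; the sub-subcase $x_i>x_t$ truthfully is symmetric.

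The final wrinkle is when $i$ happens to be the leftmost or the rightmost agent and deviates outside $[0,1]$, changing the normalization. Because Mechanism~4 is scale-free, the first step here is to restate it on general profiles in the manner used for Mechanism~2 (the Case~1/Case~2 formulas). Within a fixed regime this still reduces to Theorem~3, and the cross-regime comparisons above transfer after renormalization, since the coefficient inequality $\frac{2(1-a)}{a}>\frac{2a}{1-a}$ and the opposite-side placement in the window $x_t\in[a,1-a)$ persist under affine rescaling.
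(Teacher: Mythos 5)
Your proposal is correct and follows essentially the same route as the paper: it reduces everything except agent $i$'s regime-flipping deviations to Theorem~3, and then cases on the three windows $x_t\in[0,a)$, $x_t\in[a,1-a)$, $x_t\in[1-a,1]$ (the paper's Cases~2, 1, and 3, respectively), using the coefficient comparison $\frac{2(1-a)}{a}>\frac{2a}{1-a}$ in the outer windows and the opposite-side placement in the middle one. The only point where you are looser than the paper is the deviation that simultaneously flips the regime and changes the normalization when $i$ is itself the leftmost or rightmost agent: the paper handles this by decomposing the move into intermediate steps (its Cases~4--5), and your appeal to the comparisons ``transferring after renormalization'' would need that same explicit decomposition to be fully rigorous.
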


    Theorem 3 has proved that the mechanism is strategy-proof whether $x_i\leq x_t$ or $x_i>x_t$. Now we only need to prove agent $i$ cannot benefit by misreporting. We will prove that agent $i$ cannot benefit by misreporting.

\begin{proof}

    For the agent $i$, if she misreports her location from $x_i$ to ${x_i}^\prime$, consider the location profile $\langle\mathbf{x}_{-i}, {x_i}^\prime\rangle$. 
    
    First we discuss the case that $x_i\in(0,1)$ and ${x_i}^\prime\in(0,1)$. When $x_i\leq x_t(resp.\ x_i>x_t)$ and ${x_i}^\prime\leq x_t(resp.\ {x_i}^\prime>x_t)$, we know that $l_2$ keep still, $i$ cannot benefit. Therefore, it is necessary in case that $x_i\leq x_t(resp.\ x_i>x_t)$ and ${x_i}^\prime> x_t(resp.\ {x_i}^\prime\leq x_t)$, agent $i$ cannot benefit from misreporting. Without loss of generality, we let $x\in[0, a)$. We prove that as follows: 

    \begin{itemize}[leftmargin=0.5cm]
    \item Case 1: If $x_t$ satisfies $x_i\leq x_t$, $l_2\leq0$ and $x_i>x_t$, $l_2\geq1$, $i$ cannot benefit from misreporting, because $i$ is always closer to $x_t$ than ${l_2}^\prime$.
    \item Case 2: If $x_t$ satisfies $l_2\geq1$ whether $x_i\leq x_t$ or $x_i>x_t$, when $x_i\leq x_t$, $l_2=x_t+\max\{\frac{2(1-a)}{a}x_t,1-x_t\}\geq x_t+\max\{\frac{2a}{1-a}x_t,1-x_t\}$, which is the location of $l_2$ when $x_i>x_t$. If $x_i\leq x_t$, she will choose $x_t$ rather than $l_2$ as $l_2-x_i>1-x_i>x_t-x_i$. If $x_i$ chooses $l_2$, we have $x_i>x_t$, and she cannot be closer to ${l_2}^\prime$ by misreporting $x_i^\prime<x_t$. She cannot benefit from misreporting.
    \item Case 3: If $x_t$ satisfies $l_2\leq0$ whether $x_i\leq x_t$ or $x_i>x_t$, when $x_i\leq x_t$, $l_2=x_t-\max\{x_t,\frac{2a}{1-a}(1-x_t)\}\geq x_t-\max\{x_t,\frac{2(1-a)}{a}(1-x_t)\}$, which is the location of $l_2$ when $x_i>x_t$. If $x_i\geq x_t$, she will choose $x_t$ rather than $l_2$ as $x_i-l_2>x_i>x_i-x_t$. If $x_i$ chooses $l_2$, we have $x_i<x_t$, and she cannot be closer to ${l_2}^\prime$ by misreporting $x_i^\prime>x_t$. She cannot benefit by misreporting.
    \end{itemize}

    When $x_i\in(0,1)$ and ${x_i}^\prime\leq 0$ or ${x_i}^\prime\geq 1$, it is the same as ${x_l}^\prime<0$ or ${x_r}^\prime>1$, may be accompanied with the case when $x_i<x_t(resp.\ x_i>x_t)$ and $x_i^\prime>x_t(resp.\ x_i^\prime<x_t)$.
    Subsection 6.1 has proved that whether $x_i\leq x_t$ or $x_i>x_t$, everybody cannot benefit if somebody misreports which causes the leftmost agent's location change from $x_l=0$ to ${y_l}<0$ or the rightmost agent's location change from $x_r=1$ to ${y_r}>1$. Because we have proved that agent $i$ cannot benefit if $x_i\in(0,1)$ and misreport ${x_i}^\prime\in(0,1)$, she also cannot benefit if $x_i\in(0,1)$ and ${x_i}^\prime\leq 0$ or ${x_i}^\prime\geq 1$.

    Then we discuss the case that $x_i$ is the leftmost or rightmost agent, which means $x_i\in\{0,1\}$. Without loss of generality, we assume $x_i=0$, and she misreport to ${x_i}^\prime$. We know that $x_i\leq x_t$.

    \begin{itemize}[leftmargin=0.5cm]
        \item Case 4: When ${x_i}^\prime$ is still no more than $x_t$, it is equal to ${x_i}^\prime\leq x_t$ and the leftmost agent $l$ misreport her location. Section 6 has proved $l$ cannot benefit by misreporting, and in this case, $i$ is $l$, so she cannot benefit.
        \item Case 5: When ${x_i}^\prime>x_t$, we can divide this process into three parts. 
        First, we assume that agent $i$ misreport from $x_i=0$ to ${x_i}^\prime=x_t$, in Case 4 we have just proved that $i$ cannot benefit. Second, we assume that $x_i$ misreport from $x_i=x_t$ to $x_t<{x_i}^\prime<x_r$. We know that $x_l$ and $x_r$ keep still, and we have just proved that $i$ cannot benefit from Case 1 to Case 3. Finally, we assume that $x_i$ misreport from $x_t<x_i<x_r$ to ${x_i}^\prime>x_r$. It is the same as $r$ misreport from $x_r$ to ${x_r}^\prime>1$, Subsection 6.1 has proved that all agents cannot benefit, so $i$ cannot benefit. 
        Because $i$ cannot benefit in any part, and we know that when ${x_i}^\prime>x_t$, it must include at least the first two parts. So $i$ cannot benefit by misreporting
    \end{itemize}
    
\end{proof}

\vspace{-2pt}
\begin{theorem}
    The approximation ratio for social cost of Mechanism 4 is $\frac{1-a}{a}(n-1)$
\end{theorem}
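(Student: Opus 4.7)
The plan is to reduce Theorem~7 directly to Theorem~4 by recognizing Mechanism 4 as a branchwise specialization of Mechanism 2. On every instance with $x_i \leq x_t$, the output $(l_1, l_2)$ of Mechanism 4 is literally the output of Mechanism 2 run with $k = 2$ and the given $a \in (0, \tfrac{1}{2})$. On every instance with $x_i > x_t$, I would verify term by term that the output coincides with Mechanism 2 run with $k = 2$ and $a' := 1 - a$: when $x_t \in [0, 1-a)$, substituting $a' = 1-a$ into $x_t + \max\{\tfrac{(1-a')k}{a'} x_t,\, 1 - x_t\}$ gives $x_t + \max\{\tfrac{2a}{1-a} x_t,\, 1 - x_t\}$, matching line 9 of Mechanism 4, and the case $x_t \in [1-a, 1]$ matches line 11 in the same way.

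Since the social cost on a given reported profile depends only on the two facility locations, the per-instance social-cost upper bound established in the proof of Theorem~4 applies to each branch of Mechanism 4 separately, with the parameter of Theorem~4 instantiated as $a$ or $a' = 1 - a$ respectively. Thus on instances with $x_i \leq x_t$, the approximation ratio is bounded by
\[
\max\!\left\{\frac{(1-a)\cdot 2}{2a},\ \frac{a\cdot 2}{2(1-a)}\right\}(n-1) \;=\; \max\!\left\{\frac{1-a}{a},\ \frac{a}{1-a}\right\}(n-1),
\]
and on instances with $x_i > x_t$ the bound is obtained by swapping $a \leftrightarrow 1-a$, yielding the same two quantities inside the maximum.

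To finish, I would use the hypothesis $a \in (0, \tfrac{1}{2})$ to observe that $\tfrac{1-a}{a} > 1 > \tfrac{a}{1-a}$, so the maximum equals $\tfrac{1-a}{a}$ in both branches. Consequently, the uniform upper bound $\tfrac{1-a}{a}(n-1)$ holds on every profile, which is exactly the claim of Theorem~7.

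The only step that requires care is the algebraic identification of the $x_i > x_t$ branch with Mechanism 2 under the substitution $a' = 1 - a$; everything else is a direct invocation of Theorem~4 followed by a one-line monotonicity observation. I do not expect a genuine obstacle, since the two branches of Mechanism 4 partition all inputs (recall $i \neq t$, so the comparison $x_i \leq x_t$ versus $x_i > x_t$ is well defined), and no cross-branch interaction occurs within a single instance.
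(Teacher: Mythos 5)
Your proposal is correct and follows essentially the same route as the paper: identify each branch of Mechanism 4 as Mechanism 2 with $k=2$ and parameter $a$ or $a'=1-a$, invoke the bound $\max\{\frac{(1-a)k}{2a},\frac{ak}{2(1-a)}\}(n-1)$ from Theorem 4 per instance, and use $a\in(0,\tfrac{1}{2})$ to conclude the maximum is $\frac{1-a}{a}$. Your write-up is in fact slightly more careful than the paper's, since it makes the branchwise per-instance reduction and the $a\leftrightarrow 1-a$ substitution explicit.
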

\vspace{-2pt}

\begin{proof}
Through the proof in Theorem 4, we know that the approximation ratio of Mechanism 2 is $max\{\frac{(1-a)k}{2a},\frac{ak}{2(1-a)}\}(n-1)$. When $x_i\leq x_t$, Mechanism 4 is a specific case of Mechanism 2 when $k=2$ and $a\in(0,\frac{1}{2})$; and when $x_i>x_t$, it is a specific case of Mechanism 2 when $k=2$ and $a^\prime=1-a$. Since $a\in(0,\frac{1}{2})$, we know that $\frac{(1-a)}{2a}>\frac{a}{2(1-a)}$. Then the approximation ratio of Mechanism 4 is $\frac{1-a}{a}(n-1)$.
\end{proof}
\vspace{-1pt}
\begin{theorem}
    There exists a general example where the switching point is related to all agents.
\end{theorem}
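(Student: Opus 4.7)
The plan is to produce an explicit nice mechanism in which the switching point is genuinely a function of every non-dictator agent's reported location. Since Mechanism 4 already makes the switching point depend on one auxiliary agent $i$, the natural route is to aggregate that construction across all non-dictator agents rather than singling out one of them.

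Concretely, I would fix a strictly monotone sequence of parameters $a_0 < a_1 < \cdots < a_{n-2}$ in $(0,1/2)$ and, given a normalized profile with dictator $t$, let $k$ denote the number of non-dictator agents whose report lies at or to the left of $x_t$. The mechanism outputs $l_1 = x_t$ and places $l_2$ using the same formulas as Mechanism 4, but with the parameter $a_k$ substituted for $a$. Because any agent crossing $x_t$ changes $k$ and hence the operative switching parameter, the switching point genuinely depends on every agent's position, not just on the dictator and the extreme agents.

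Strategy-proofness would then be verified by adapting the case analysis in the proof of Theorem 5. The dictator has cost $0$ and cannot profit. For any other agent, any deviation that does not cross $x_t$ leaves $k$ unchanged, and reduces to an instance already handled in Theorem 5 with parameter $a_k$. The only new situation is a deviation that crosses $x_t$, which replaces $a_k$ with $a_{k \pm 1}$; here the monotonicity of $\{a_k\}$ together with the Mechanism 4 formulas keeps the two candidate positions for $l_2$ on the correct extreme side of the profile and at distances from $x_t$ that make the deviator strictly worse off. The main obstacle is the joint case in which a deviation simultaneously crosses $x_t$ and shifts the leftmost or rightmost reported position, since this perturbs both $k$ and the range $L$ entering the formula. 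I would handle this by combining the extremal-deviation argument of Subsection 6.1 with a sufficiently small choice of consecutive gaps $a_{k+1} - a_k$, so that the displacement of $l_2$ caused by changing $k$ always dominates any gain the deviator could hope to extract from changing the range.

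The bounded approximation ratio is immediate from the analysis underlying Theorem 6 applied at each value of $k$: since $l_2$ is always placed by a Mechanism 4 formula with some $a_k \in (0,1/2)$, the social cost on any profile is at most $\max_{k} \frac{1-a_k}{a_k}(n-1)$ times the optimum, which is linear in $n$ and certifies the mechanism is nice. This yields the required example of a nice mechanism whose switching point depends on the reported locations of all agents.
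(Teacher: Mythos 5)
Your overall plan matches the paper's: its Mechanism~5 likewise aggregates the Mechanism~4 idea over all non-dictator agents, adjusting the switching parameter $a$ by a small amount $c_i$ for each agent $i$ according to which side of the dictator she reports. But your construction has the monotonicity running in the wrong direction, and this is fatal. In the paper, an agent reporting at or to the left of $x_t$ \emph{decreases} $a$; this pushes $l_2$ toward the left side, which is the only side such an agent could conceivably prefer over $x_t$, so her truthful report already moves the outcome in her favor and crossing $x_t$ can only hurt her. You instead take $a_k$ \emph{increasing} in the number $k$ of agents at or to the left of $x_t$, so a left agent who crosses to the right lowers the operative parameter from $a_k$ to $a_{k-1}$, which can flip $l_2$ from far to the right to a point at or near $0$ --- exactly where she wants it. Concretely, take $n=5$ with agents at $0$ and $0.1$, the dictator at $x_t=0.4$, and agents at $0.8$ and $1$; choose $a_1=0.2$ and $a_2=0.45$ (completed to any monotone sequence in $(0,1/2)$). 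Truthfully $k=2$ and $x_t<a_2$, so $l_2=0.4+\max\{\tfrac{2(0.55)}{0.45}\cdot 0.4,\,0.6\}\approx 1.38$ and the agent at $0.1$ has cost $0.3$. If she misreports to $0.5$, then $k=1$, $x_t\ge a_1$, and $l_2'=0.4-\max\{0.4,\,\tfrac{2(0.2)}{0.8}\cdot 0.6\}=0$, giving her cost $0.1$. So the mechanism as specified is not strategy-proof.

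Note also that shrinking the gaps $a_{k+1}-a_k$ cannot repair this: the defect is the sign of the response to a crossing deviation, not its magnitude, and your ``displacement of $l_2$ dominates any gain'' heuristic is the wrong criterion --- what is needed is that any deviation moves $l_2$ only weakly away from the deviator's preferred side, which is precisely what the paper's choice of direction guarantees. If you reverse the direction (so that $a$ decreases as more agents lie at or to the left of $x_t$), your construction becomes a count-based special case of the paper's Mechanism~5, and the remaining verification does reduce to the Mechanism~4 case analysis as you and the paper both assert; your approximation bound $\max_k \tfrac{1-a_k}{a_k}(n-1)$ is fine either way.
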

\vspace{-2pt}
The switching point of the scale-free nice mechanism can be related to the relative position of dictator $t$ and each agent except for the dictator. In other words, the switching point can be related to the permutation of agents. Mechanism 5 is an example.

\begin{mechanism}[h]
    \caption{}
    \label{mech:5}
    \begin{algorithmic}[1]
        \REQUIRE The normalized location profile reported by agents $\mathbf{x}=(x_1,...,x_n)$, a unique dictator $t$.
        \ENSURE The locations of the two facilities
        \STATE Let the leftmost agent be $l$, the rightmost agent be $r$, and set $S=N/t$
        \STATE Set $a\gets \frac{1}{2}$
        \FOR{agent $i\in S$}
            \STATE Let $c_i$ be a arbitrary number satisfies $c_i \in (0,\frac{1}{2n})$.
            \IF{$x_i\leq x_t$}
                \STATE Set $a \gets a-c_i$.
            \ELSE
                \STATE Set $a \gets a+c_i$.
            \ENDIF
        \ENDFOR
        \IF {$x_t\in [0,a)$}
            \STATE Set $l_2\gets x_t+\max\{\frac{2(1-a)}{a}x_t,1-x_t\}$.
            \ELSE
            \STATE Set $l_2\gets x_t-\max\{x_t,\frac{2a}{1-a}(1-x_t)\}$.
            \ENDIF
        \RETURN $l_1,\ l_2$
    \end{algorithmic}
\end{mechanism}

Just as we demonstrated the strategy-proofness of Mechanism 4, we can similarly prove that Mechanism 5 is also strategy-proof. Additionally, as demonstrated in Subsection 6.2, the approximation ratio of mechanism 5 is bounded.

\section{Conclusion}

In this paper, we provide a more detailed characterization result and give three different classes of mechanisms with a unique dictator. We claim that the switching point can be any number between 0 and 1, its location can be related to the permutation of agents, and a mechanism can have an arbitrary number of switching points greater than one. These results demonstrate the rich diversity of strategy-proof mechanisms, which can help others get a more complete picture of the characterization of this problem. We also prove that the consistency of the two-facility game on the line with prediction is $\Omega(n)$ if we require that the same mechanism can achieve a bounded approximation ratio in the case of unreliable prediction. This result indicates that the mechanism can never achieve $o(n)$ approximation ratios, even with accurate prediction.

The open question arising from this work is providing necessary and sufficient conditions for all nice mechanisms with a unique dictator. All characterization results we currently know are necessary conditions for nice mechanisms. Although the problem's mechanisms are more complex than expected, we still anticipate that others can completely characterize all the nice mechanisms in the two-facility game problem. It will not only give us a deeper understanding of the two-facility game problem but also provide us with a good starting point for more general facility problems.

\printbibliography
% alphaUrlePrint
\vspace{-15pt}

\newpage
\appendix
\section{Appendix A: Proving the robustness of Mechanism 1}

For a set of agents $N$ and a location profile $\mathbf{x}$, we assume that the facility locations in $\mathbf{x}$ with respect to OPT are ${OPT}_1$ and ${OPT}_2$, without loss of generality, we assume that ${OPT}_1<{OPT}_2$. Let $\alpha$ be the set of agents near ${OPT}_1$, and let $\beta$ be the set of agents near ${OPT}_2$. Since the locations are on a line, the leftmost agent at $\min \mathbf{x}$ will definitely go to ${OPT}_1$, i.e., in the set $\alpha$, and the rightmost agent at $\max \mathbf{x}$ will definitely go to ${OPT}_2$, i.e., in the set $\beta$. Let $I_\alpha$ be the shortest line segment that covers all the locations of the agents in set $\alpha$, and let $I_\beta$ be the shortest line segment that covers all the locations of the agents in set $\beta$. Since each agent must go to one facility, it follows that $\alpha \cup \beta = N$. Moreover, since no agent can go to both facilities, we have $\alpha \cap \beta = \emptyset$, and ${I}_\alpha$ and $I_\beta$ do not overlap. Therefore, we have $OPT(\mathbf{x}) \geq I_\alpha + I_\beta$.

Without loss of generality, we assume that the unique dictator in the mechanism 1 is in $\alpha$, i.e., $\hat{t}\in\alpha$, and $d_A\leq I_\alpha$ 

\begin{itemize}[leftmargin=0.5cm]
  \item Case 1: When $I_\alpha=0$, as shown in Figure 4. Only the leftmost agent at $\min \mathbf{x}$ chooses $OPT_1$ in the optimal solution, and all other agents choose ${OPT}_2$. The location of the facility $l_2$ output by the mechanism 1 is $\max \mathbf{x}\in I_\beta$, where $\beta$ denotes the set of agents that choose ${OPT}_2$. Therefore, for all agents in $\alpha$, their cost is $0$. For all agents $i\in\beta$, let them all choose $l_2$, the sum of their costs is:
  
  \[\sum_{i\in\beta} cost(f(\mathbf{x}),x_i) \leq \left|\beta\right|I_\beta \leq (n-1)OPT(\mathbf{x})\]
  
  \begin{figure}[H]
      \centering      \includegraphics[width=0.5\textwidth]{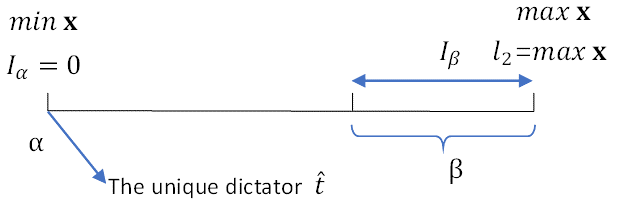}
      \caption{Case 1}
      \label{figure3-4}
  \end{figure}
  \item Case 2: 
  When ${I}_\alpha\neq0$ and $d_A\leq d_B$, as shown in Figure 5, mechanism 1 outputs the dictator location $x_{\hat{t}}$ and another facility $l_2$, whose location is greater than or equal to $\max \mathbf{x}$. For all agents $i\in\alpha$, let them choose the facility located at $x_{\hat{t}}$, with a corresponding cost of $cost(f(\mathbf{x}),x_i)\leq I_\alpha\leq OPT(\mathbf{x})$. Conversely, let all agents $i\in\beta$ choose the facility located at $l_2$. When $2d_A<d_B$, the location of $l_2$ is $\max\mathbf{x}$ and $x_{l_2}\in I_\beta$, with a corresponding cost of $cost(f(\mathbf{x}),x_i)\leq I_\beta\leq OPT(\mathbf{x})$. In the case where $2d_A\geq d_B$, the distance from an agent to facility is no more than half of the distance between $x_{\hat{t}}$ and $l_2$, i.e., $cost(f(\mathbf{x}),x_i)\leq \frac{x_{l_2}-x_{\hat{t}}}{2}\leq \frac{2d_A}{2}\leq I_\alpha\leq OPT(\mathbf{x})$. As the cost of $\hat{t}$ is 0, the overall cost is given by:
  \[\sum_{i\in\mathbf{N}} cost(f(\mathbf{x}),x_i)\leq (n-1)OPT(\mathbf{x})\]
   \begin{figure}[H]
      \centering \includegraphics[width=0.5\textwidth]{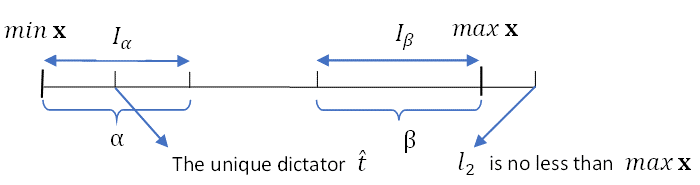}
      \caption{Case 2}
      \label{figure3-5}
  \end{figure}
  \item Case 3: When ${I}_\alpha\neq0$ and $d_A>d_B$, as shown in Figure 6, the mechanism 1 outputs the dictator location $\hat{t}$ and another facility $l_2$, whose location is less than or equal to $\min \mathbf{x}$. Let all agents choose the facility located at $x_{\hat{t}}$. For agents $i\in\alpha$, their cost is $cost(f(\mathbf{x}),x_i)\le I_\alpha\le OPT(\mathbf{x})$. For all agents $i\in\beta$, their cost is $cost(f(\mathbf{x}),x_i)\le d_B\le d_A\le I_\alpha\le OPT(\mathbf{x})$. The cost of $\hat{t}$ is 0, so the total cost is:
  
  \[\sum_{i\in\mathbf{N}} cost(f(\mathbf{x}),x_i)\leq (n-1)OPT(\mathbf{x})\]
  
  \begin{figure}[H]
      \centering
      \includegraphics[width=0.5\textwidth]{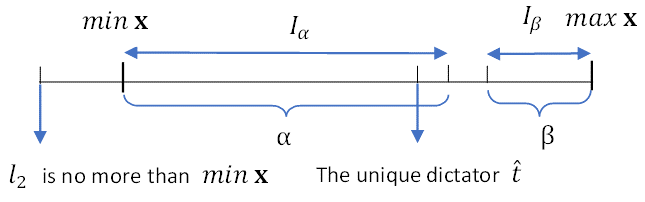}
      \caption{Case 3}
      \label{figure3-6}
  \end{figure}
\end{itemize}

In summary, the approximation ratio of the mechanism 1 is $n-1$ for two-facility game on a line.

\end{document}